\newtheorem{lemma}{Lemma}
\newcommand{\Oh}{{\ensuremath{\mathcal{O}}}}
\newcommand{\df}{\textit}
\newcommand{\Q}{\ensuremath{\mathcal{Q}}\xspace}
\newcommand{\D}{\ensuremath{\mathcal{D}}\xspace}
\newcommand{\eps}{\ensuremath{\varepsilon}}
\newcommand{\NP}[0]{\texttt{NP}}
\newcommand{\rnd}{\operatorname{random}}
\newcommand{\fanout}{\operatorname{fanout}}
\newcommand{\pfanout}{\operatorname{p-fanout}}
\DeclareMathOperator*{\argmax}{arg\,max}
\title{Social Hash Partitioner:\\ A Scalable Distributed Hypergraph Partitioner}
\author[1]{Igor Kabiljo}
\author[1]{Brian Karrer}
\author[1]{Mayank Pundir}
\author[1]{Sergey Pupyrev}
\author[1]{Alon Shalita}
\affil[1]{Facebook}
\begin{document}
\date{}

\maketitle
\vspace{-1.5cm}
\begin{abstract}
    We design and implement a distributed algorithm for balanced $k$-way hypergraph partitioning that minimizes fanout, a fundamental hypergraph quantity also known as the communication volume and ($k-1$)-cut metric, by optimizing a novel objective called \textit{probabilistic fanout}.  This choice allows a simple local search heuristic to achieve comparable solution quality to the best existing hypergraph partitioners. 
    
    Our algorithm is arbitrarily scalable due to a careful design that controls computational complexity, space complexity, and communication.  In practice, we commonly process hypergraphs with billions of vertices and hyperedges in a few hours. We explain how the algorithm's scalability, both in terms of hypergraph size and bucket count, is limited only by the number of machines available.  We perform an extensive comparison to existing distributed hypergraph partitioners and find that our approach is able to optimize hypergraphs roughly $100$ times bigger on the same set of machines.
    
    We call the resulting tool \textit{Social Hash Partitioner} (SHP), and accompanying this paper, we open-source the most scalable version based on recursive bisection. 
\end{abstract}

\section{Introduction}
\label{sect:intro}
The goal of graph partitioning is to divide the vertices of a graph into a number of equal size components, so as
to minimize the number of edges that cross components. It is a classical and well-studied problem with origins in parallel scientific computing and VLSI design placement~\cite{BMSSS16}. Hypergraph partitioning is relatively less well-studied than graph partitioning.  Unlike a graph, in a hypergraph, an edge, referred to as a hyperedge, can connect to any number of vertices, as opposed to just two.  The revised goal is to divide the vertices of a hypergraph into a number of equal size components, while 
minimizing the number of components hyperedges span.

While graph partitioning is utilized in a variety of applications, hypergraph partitioning can be a more accurate model of many real-world problems~\cite{DBHBC06,CJZM10,CA99}. 
For example, it has been successfully applied for optimizing distributed
systems~\cite{GHKS14,KQDK14,SH16} as well as distributed scientific computation~\cite{DBHBC06,CA99}.  For another example, hypergraph partitioning accurately models the problem of minimizing the number of transactions in distributed data placement~\cite{CJZM10}.

Our primary motivation for studying hypergraph partitioning comes from the problem of storage sharding common in distributed databases. Consider a scenario with a large
dataset whose data records are distributed across several storage servers. A query to the database may 
consume several data records. If the data records are located on multiple servers, the query is answered by
sending requests to each server. Hence, the assignment of data records to servers
determines the number of requests needed to process a query; this number is often called the
\emph{fanout} of the query. Queries with low fanout can be answered more quickly, as there is less chance of
contacting a slow server~\cite{SH16,taleAtScale}. Thus, a common optimization is to choose an assignment 
of data records that collocates the data required by different queries.

\begin{figure}[t]
    \begin{minipage}[b]{0.98\textwidth}
        \begin{subfigure}[t]{.31\textwidth}
            \centering
            \includegraphics[page=1]{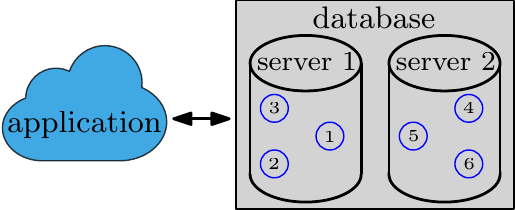}
            \caption{Storage sharding}
        \end{subfigure}
        \hfill
        \begin{subfigure}[t]{.31\textwidth}
            \centering
            \includegraphics[page=2]{figures/motivation}
            \caption{Model with bipartite graph}
            \label{fig:modelB}
        \end{subfigure}
        \hfill
        \begin{subfigure}[t]{.31\textwidth}
            \centering
            \includegraphics[page=3]{figures/motivation}
            \caption{Model with hypergraph}
            \label{fig:modelC}
        \end{subfigure}
        \caption{The storage sharding problem modeled with the (b)~bipartite graph and (c)~hypergraph 
            partitioning. Given three queries (red), $\{1,2,6\}$, $\{1,2,3,4\}$, $\{4,5,6\}$, the goal
            is to split six data vertices (blue) into two buckets, $V_1$ and $V_2$, so that the average
            query fanout is minimized. For $V_1=\{1,2,3\}$ and $V_2=\{4,5,6\}$, fanout of the queries
            is $2$, $2$, and $1$, respectively.}
        \label{fig:model}
    \end{minipage}
\end{figure}

Storage sharding is naturally modeled by the hypergraph partitioning problem; see Figure~\ref{fig:model}. 
For convenience, and entirely equivalent to the hypergraph formulation, we follow 
the notation of~\cite{GHKS14,DKKOPS16} and define the problem using a bipartite graph.
Let~$G=(\Q \cup \D, E)$ be an undirected bipartite
graph with disjoint sets of query vertices, $\Q$, and data vertices, $\D$.
The goal is to partition $\D$ into $k$ parts, that is, find a collection of $k$ disjoint
subsets $V_1, \dots , V_k$ covering $\D$ that minimizes an objective function. 
The resulting subsets, also called \df{buckets}, should be \emph{balanced}, that is,
$|V_i| \le (1+\eps) \frac{n}{k}$ for all $1 \le i \le k$ and some $\eps \ge 0$, where $n=|\D|$.

This optimization is exactly equivalent to balanced 
hypergraph partitioning with the set of vertices $\D$ and hyperedges $\{v_1,\dots,v_r\}$,
$v_i \in \D$ for every $q \in \Q$ with $\{q, v_i\} \in E$.  The data vertices are shared between the bipartite graph and its hypergraph representation, and each query vertex corresponds to a single hyperedge that spans all data vertices connected to that query vertex in the bipartite graph. Figures~\ref{fig:modelB} and~\ref{fig:modelC} show this equivalent representation.

For a given partitioning $P=\{V_1, \dots , V_k\}$ and a query vertex $q \in \Q$, we define
the \df{fanout} of $q$ as the number of distinct buckets having a data vertex incident to
$q$: $$\fanout(P, q) = |\{V_i : \exists \{q, v\} \in E, v \in V_i\}|.$$
The quality of partitioning $P$ is the average query fanout:
$$\fanout(P) = \frac{1}{|\Q|} \sum_{q \in \Q} \fanout(P, q).$$
The fanout minimization problem is, given a graph $G$, an integer $k > 1$, and a real 
number $\eps \ge 0$, find a partitioning of $G$ into $k$ subsets with the minimum average fanout.

Modulo irrelevant constant factors, the fanout
is also called the communication volume~\cite{DKUC15} and the ($k$ - 1) cut metric.  Arguably this is the most widely used objective for hypergraph partitioning~\cite{BMSW13,DKUC15}.\footnote{Although communication volume seems a more popular term for the measure
    in hypergraph partitioning community, we utilize fanout here to be consistent with database sharding terminology.}
Fanout is also closely related to the sum of external degrees, but not identical~\cite{KK00}.\footnote{The sum of external degrees (SOED) objective is equivalent to unnormalized fanout plus the number of query vertices with fanout greater than one (that is, SOED is the communication volume plus the hyperedge cut).}

Not surprisingly, the hypergraph partitioning problem is \NP-hard~\cite{AR06}. Hence, exact algorithms
are only capable of dealing with fairly small problem sizes.
Practical approaches all use heuristics. Even then, a significant problem with the majority of proposed algorithms is scalability.  Scalability is desirable because modern graphs can be massive; for example, the Facebook Social Graph contains billions of vertices and
trillions of edges, consuming many hundred of petabytes of storage space~\cite{SH16}.  

Scalability can be achieved through distributed computation, where the algorithm is distributed across
many workers that compute a solution in parallel. Earlier works on distributed hypergraph partitioning have proposed and implemented such algorithms, but as we argue in Section~\ref{sect:related} and
experimentally demonstrate in Section~\ref{sect:exp},
\emph{none of the existing tools are capable of solving the problem at large scale}.  No existing partitioner was able to partition a hypergraph with a billion hyperedges using four machines, and hypergraphs relevant to storage sharding at Facebook can be two or three orders of magnitude larger.  

With this in mind, \emph{we design, implement, and open source a scalable distributed algorithm for fanout minimization}.  We denote the resulting implementation \textit{Social Hash Partitioner} (\texttt{SHP}) because it can be used as the hypergraph partitioning component of the Social Hash framework introduced in~\cite{SH16}.  The greater framework has several other major components and the specific choice of hypergraph partitioner was only briefly mentioned previously, and this entirely self-contained paper delves into our partitioner in detail. The contributions of the paper are the following:

\begin{itemize}
    \item We design an iterative algorithm for hypergraph partitioning aimed at optimizing fanout through a classical local search heuristic~\cite{KL70} with two substantive modifications. The first is a novel objective function that generalizes fanout called \textit{probabilistic fanout}; using this objective for fanout minimization improves result quality and algorithm convergence. Our second modification facilitates a distributed implementation of the algorithm.
    
    \item We describe \texttt{SHP} and provide a detailed implementation of the algorithm that runs in a parallel manner, carefully explaining how it limits memory usage, computation, and communication.  \texttt{SHP} relies on the vertex-centric programming model and scales to hypergraphs with billions of vertices and hyperedges with a reasonable number of machines.  In addition to optimizing fanout, we also show that \texttt{SHP} can optimize other hypergraph objectives at scale.  We have open sourced the simple code for this implementation.  
    
    \item We present results of extensive experiments on a collection of real-world and synthetic hypergraphs. These results show that \texttt{SHP} is arbitrarily scalable while producing partitions of comparable quality to existing partitioners. \texttt{SHP} is able to partition hypergraphs $100x$ larger than existing distributed partitioners, making it the only solution capable of partitioning billion-vertex bipartite graphs using a cluster of machines.    
\end{itemize}

We emphasize that the paper describes an algorithmic solution utilized in our Social Hash framework. We refer the reader 
to~\cite{SH16} for the details of the framework, additional applications, and real-world experiments.


\section{Related Work}
\label{sect:related}

There exists a rich literature on graph partitioning from both theoretical and practical
points of view. We refer the reader to surveys by Bichot and Siarry~\cite{BS13} and 
by Bulu{\c{c}}~et~al.~\cite{BMSSS16}. Next we discuss existing approaches for storage sharding
that utilize graph partitioning and review theoretical results on the problem.
Then we describe existing libraries for hypergraph partitioning, focusing on the publicly 
available ones.
Finally, we analyze limitations of the tools, which motivate the development of
our new solution, \texttt{SHP}.

\paragraph*{Storage sharding.} 
Data partitioning is a core component of many existing distributed 
data management systems~\cite{CJZM10,GHKS14,KQDK14,SH16,Kie16}.
The basic idea is to co-locate related data so as to minimize communication overhead. 
In many systems, the data partitioning problem is reduced to a variant of graph 
partitioning~\cite{CJZM10,GHKS14,Kie16}; a solution is then found using a publicly available 
library such as \texttt{Metis}~\cite{KK95}. However, hypergraph partitioning is a better model
in several scenarios of storage sharding~\cite{KQDK14,SH16}. There exists much fewer
tools for partitioning of hypergraphs, as the problem tends to be harder and more 
computationally expensive than graph partitioning.

\paragraph*{Theoretical aspects.} 
The fanout minimization problem is a generalization of 
balanced $k$-way graph partitioning (also called minimum bisection when $k=2$), which is a central problem 
in design and analysis of approximation algorithms.
Andreev and R\"{a}cke~\cite{AR06} show that, unless \texttt{P}=\NP, there is no algorithm 
with a finite approximation factor for the problem when one requires a perfect balance, that is,
$\eps = 0$. Hence, most works focus on the case $\eps > 0$.
Leighton et al.~\cite{LMT90} and Simon and Teng~\cite{ST97} achieved
an $\Oh(\log k \log n)$ approximation algorithm for $\eps = 1$, that is, when the maximum 
size of the resulting buckets is $\frac{2n}{k}$.
The bound has been improved to $\Oh(\sqrt{\log k \log n})$ for $\eps=1$ by Krauthgamer et al.~\cite{KNS09} and
to $\Oh(\log n)$ for $\eps > 0$ by Feldmann and Foschini~\cite{FF15}.

We stress that being of high theoretical importance, the above approximation algorithms
are too slow to be used for large graphs, as they require solving linear programs. Hence,
most of the existing methods for graph and hypergraph partitioning are heuristics 
based on a simple local search optimization~\cite{BMSSS16}. We follow the same
direction and utilize a heuristic that can be implemented in an efficient way.
It is unlikely that one can provide strong theoretical guarantees on a local search
algorithm or improve existing bounds: Since fanout minimization is a generalization of
minimum bisection, it would imply a breakthrough result.

\paragraph*{Existing tools for hypergraph partitioning.}
As a generalization of graph partitioning, hypergraph 
partitioning is a more complicated topic and the corresponding algorithms are typically more 
compute and memory intensive. Here we focus on existing solutions designed for hypergraph partitioning.
\texttt{PaToH}~\cite{CA99}, \texttt{hMetis}~\cite{KAKS99}, and \texttt{Mondriaan}~\cite{VB05} 
are software packages providing single-machine algorithms. The tools
implement different variants of local refinement algorithms, such as Kernighan-Lin~\cite{KL70} or
Feduccia-Mattheyses~\cite{FM82}, that incrementally swap vertices among partitions to reduce
an optimization objective, until the process reaches a local minimum. Since such local search
algorithms can suffer from getting stuck in local minima, a multi-level paradigm is often used. The idea is to create a sequence
of ``coarsened'' hypergraphs that approximates the original hypergraph but have a smaller size. Then
the refinement heuristic is applied on the smaller hypergraph, and the process is reverted
by an uncoarsening procedure. Note that the above software packages all require random-access to the 
hypergraph located in memory; as a result these packages can only handle smaller hypergraphs.

\texttt{Parkway}~\cite{TK08} and \texttt{Zoltan}~\cite{DBHBC06} are distributed hypergraph partitioners that are based on a parallel version of the multi-level technique. Unfortunately, as we argue below and show in our experiments, the provided implementations 
do not scale well. 
We also mention a number of more recent tools for hypergraph partitioning. \texttt{UMPa}~\cite{DKUC15}
is a serial
partitioner with a novel refinement heuristic. It aims at minimizing several objective functions simultaneously. \texttt{HyperSwap}~\cite{YWMW16} is a distributed algorithm
that partitions hyperedges, rather than vertices. \texttt{rFM}~\cite{STA12}, allows 
replication of vertices in addition to vertex moves. We do not include these algorithms in our experimental 
study as they are not open-source and they are rather complex to be re-implemented in a fair way.

\paragraph*{Limitations of existing solutions.}
\texttt{Parkway} and \texttt{Zoltan} are two hypergraph partitioners that are designed 
to work in a distributed environment. Both of the tools implement a multi-level coarse/refine technique~\cite{KAKS99}. 
We analyzed the algorithms and identified the following scalability limitations.

\begin{itemize}
    \item First, multi-level schemes rely on an intermediate step in which the coarsest graph is partitioned on a single machine before it gets uncoarsened. While the approach is applicable for graph partitioning (when the coarsest graph is typically fairly small), 
    it does not always work for hypergraphs. For large instances, the number of \df{distinct} hyperedges can be substantial, and
    even the coarsest hypergraph might not fit the memory of a single machine.
        
    \item Second, the refinement phase itself is often equivalent to the local refinement scheme presented in our work, which if not done carefully can lead to scalability issues. For example, \texttt{Parkway} is using a single coordinator to approve vertex swaps while retaining balance. This coordinator holds the concrete lists of vertices and their desired movements, which leads to yet another single machine bottleneck.
    
    \item Third, the amount of communication messages between different machines/processors is an important aspect for a 
    distributed algorithm. Neither \texttt{Zoltan} nor \texttt{Parkway} provide strong guarantees on communication
    complexity. For example, the authors of \texttt{Zoltan} present their evaluation for mesh-like graphs (commonly used
    in scientific computing) and report relatively low communication overhead. Their results might not hold for 
    general non-planar graphs.
\end{itemize}    

In contrast, our algorithm is designed (as explained in the next section) to avoid these single machine bottlenecks and communication overheads.

\section{Social Hash Partitioner}
\label{sect:our_solution}

Our algorithm for the fanout minimization problem, as mentioned in Section~\ref{sect:intro}, assumes that the input is represented as a bipartite graph $G=(\Q \cup \D, E)$ with vertices representing queries and data 
objects, and edges representing which data objects are needed to answer the queries.
The input also specifies the number of servers that are available to serve the queries, $k>1$, and 
the allowed imbalance, $\eps > 0$.

\subsection{Algorithm}
\label{sect:algo}

For ease of presentation, we start with a high-level overview of our algorithm.  The basic idea is inspired by the Kernighan-Lin heuristic~\cite{KL70} for graph partitioning; see Algorithm~\ref{algo:bp}.
The algorithm begins with an initial random partitioning of data vertices into $k$ buckets. For every vertex, we independently 
pick a random bucket, which for large graphs guarantees an initial perfect balance. The algorithm then proceeds in multiple
steps by performing vertex swaps between the buckets in order to improve an objective function.
The process is repeated until a convergence criterion is met (e.g., no swapped vertices)
or the maximum number of iterations is reached.

Although the algorithm is similar to the classical one~\cite{KL70}, we introduce two critical modifications. The first concerns the objective function and is intended to improve
the quality of the final result.
The second is related to how we choose and swap vertices between buckets; this modification is needed to make distributed implementations efficient. 

\begin{algorithm}[t]
    \caption{Fanout Optimization}
    \label{algo:bp}
    
    \SetKwInOut{Input}{Input}
    \SetKwInOut{Output}{Output}
    \Input{graph $G=(\Q \cup \D, E)$, the number of buckets $k$, imbalance ratio $\eps$}
    \Output{buckets $V_1,V_2,\dots,V_k$}
    \BlankLine
    \For(\tcc*[f]{initial partitioning}){$v \in \D$}{
        $bucket[v] \leftarrow \rnd(1,k)$\;   
    }
    \Repeat(\tcc*[f]{local refinement}){converged {\bf or} iteration limit exceeded}{
        \For{$v \in \D$}{
            \For{$i=1$ \emph{\KwTo} $k$}{
                $gain_i[v] \leftarrow ComputeMoveGain(v, i)$\;
            }
            \tcc{find best bucket}
            $target[v] \leftarrow \argmax_i gain_i[v]$\;
            \tcc{update matrix}
            \If{$gain_{target[v]}[v] > 0$}{
                $S_{bucket[v],target[v]} \leftarrow S_{bucket[v],target[v]} + 1$\;
            }
        }
        \BlankLine
        \tcc{compute move probabilities}
        \For{$i,j=1$ \emph{\KwTo} $k$}{
            $probability[i,j] \leftarrow \frac{\min(S_{i,j}, S_{j,i})}{S_{i,j}}$\;
        }
        \BlankLine
        \tcc{change buckets}
        \For{$v \in \D$}{
            \If{$gains[v] > 0$ {\bf and} $\rnd(0,1) < probability[bucket[v],target[v]]$}{
                $bucket[v] \leftarrow target[v]$\;
            }
        }
    }
\end{algorithm}

\paragraph*{Optimization objective.}
Empirically, we found that fanout is rather hard to minimize with a local search heuristic. Such a heuristic can easily get stuck in a local minimum for fanout minimization.  
Figure~\ref{fig:fanout} illustrates this with an example which lacks a single move of a data vertex that improves fanout. All move gains are non-positive, and the local search algorithm stops in the non-optimal state.  To alleviate this problem, we propose a modified optimization objective and assume that a query $q \in Q$ only requires an adjacent data vertex $v \in \D, \{q, v\} \in E$ for some probability $p \in (0,1)$ fixed for all queries. This leads to so-called \df{probabilistic fanout}. The probabilistic fanout of a given query $q$, denoted by $\pfanout(q)$, is the expected number of servers that need to be contacted to answer the query given that each adjacent server needs to be contacted with independent probability $p$.

Formally, let $P=\{V_1, \dots , V_k\}$ be a given partitioning of $\D$, and let the number
of data vertices in bucket $V_i$ adjacent to query vertex $q$ be $n_i(q) = |\{v : v \in V_i \text{ and } \{q, v\} \in E\}|$.
Then server $i$ is expected to be queried with probability $1 - (1-p)^{n_i(q)}$.
Thus, the $\pfanout$ of $q$ is $\sum_{i=1}^k (1 - (1-p)^{n_i(q)})$, and our probabilistic fanout objective, denoted $\pfanout$, for Algorithm~\ref{algo:bp} is, given $p \in (0,1)$, minimize 
$$
\frac{1}{|Q|}\sum_{q \in \Q} \pfanout(q) = 
\frac{1}{|Q|}\sum_{q \in \Q} \sum_{i=1}^k \left(1 - (1-p)^{n_i(q)}\right).
$$

\begin{figure}[t]
    \centering
    \includegraphics[width=0.3\textwidth]{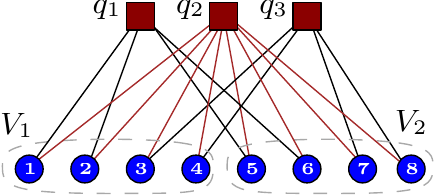}
    \caption{An example in which no single move/swap of data vertices improves query fanout.
        Probabilistic fanout (for every $0<p<1$) can be improved by exchanging buckets of vertices $4$ and $5$ 
        or by exchanging buckets of vertices $3$ and $6$. Applying both of the swaps reduces (non-probabilistic) fanout 
        of $q_1$ and $q_3$, which yields an optimal solution.}
    \label{fig:fanout}
\end{figure}

This revised objective is a smoother version of the (non-probabilistic) fanout.  It is simple to observe that $\pfanout(q)$ is less than or equal to $\fanout(q)$ for all $q \in Q$. If the
number of data adjacencies of $q$ in bucket $i$ is large enough, that is, $n_i(q) \gg 1$, then the bucket contributes to the objective 
a value close to $1$. If $n_i(q)=1$, then the contribution is $p$, which is smaller for $p<1$.
In contrast, the non-probabilistic fanout contribution is simply $1$, the same for all cases with $n_i(q) \ge 1$.

From a theoretical perspective, the way probabilistic fanout smooths fanout is by averaging the fanout objective over an ensemble of random graphs similar to the bipartite graph being partitioned.  A bipartite graph from this random graph ensemble is created by independently removing edges in the original bipartite graph with probability $p$.  Then the probabilistic fanout is precisely the expectation of fanout across this random graph ensemble.  In essence, $\pfanout$ minimization is forced to select a partition that performs robustly across a collection of similar hypergraphs, reducing the impact of local minima. With the new objective the state in Figure~\ref{fig:fanout} is no longer 
a local minimum, as data vertices $1$ and $2$ could be swapped to improve the $\pfanout$ of both $q_1$ and $q_3$.

An interesting aspect of $\pfanout$ is how it behaves in extreme cases. As we show next, when $p \to 1$, $\pfanout$ 
becomes the (non-probabilistic) fanout. When $p \to 0$, the new measure is equivalent to optimizing a weighted \df{edge-cut}, 
an objective suggested in prior literature~\cite{alpert1995recent, CA99, alpert1996hybrid}.
In practice it means that $\pfanout$ is a generalization of these measures and Algorithm~\ref{algo:bp} can be utilized to optimize either by setting small or large values of $p$.

\begin{lemma}
    \label{lm:p1}
    Minimizing $\pfanout$ in the limit as $p \to 1$ is equivalent to minimizing fanout.
\end{lemma}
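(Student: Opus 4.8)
The plan is to show that the probabilistic fanout of a \emph{fixed} balanced partition converges, as $p \to 1^{-}$, to its ordinary fanout, and then to promote this pointwise convergence to a statement about minimizers by exploiting the finiteness of the set of balanced partitions.

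First I would fix an arbitrary balanced partition $P = \{V_1,\dots,V_k\}$ and a query vertex $q \in \Q$, and examine the per-bucket contribution $1 - (1-p)^{n_i(q)}$ as $p \to 1^{-}$. If $n_i(q) = 0$, this term equals $1 - (1-p)^{0} = 0$ for every $p < 1$, so it contributes nothing in the limit and requires no limiting argument. If $n_i(q) \ge 1$, then $(1-p)^{n_i(q)} \to 0$, so the term tends to $1$. Summing over buckets, $\sum_{i=1}^{k}\bigl(1 - (1-p)^{n_i(q)}\bigr) \to |\{i : n_i(q) \ge 1\}| = \fanout(P,q)$, and averaging over the finitely many $q \in \Q$ gives $\lim_{p \to 1^{-}} \pfanout(P) = \fanout(P)$.

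Next I would use that, for fixed $n = |\D|$, $k$, and $\eps$, there are only finitely many partitions of $\D$ into $k$ balanced buckets; call this finite set $\mathcal{P}$ and let $F^{*} = \min_{P \in \mathcal{P}} \fanout(P)$. Since the fanout values realized on $\mathcal{P}$ form a finite set, every sub-optimal $P$ satisfies $\fanout(P) \ge F^{*} + \delta$ for some fixed $\delta > 0$. By the pointwise convergence established above, applied to each of the finitely many members of $\mathcal{P}$, there is $p_{0} < 1$ such that $|\pfanout(P) - \fanout(P)| < \delta/2$ for all $P \in \mathcal{P}$ whenever $p_{0} < p < 1$. For such $p$, any sub-optimal $P$ has $\pfanout(P) > F^{*} + \delta/2$, while any $\fanout$-optimal $P'$ has $\pfanout(P') < F^{*} + \delta/2$; hence every minimizer of $\pfanout$ is also a minimizer of $\fanout$, so solving the $\pfanout$ problem solves the fanout problem, which is the asserted equivalence.

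The only mild obstacle is this last step: pointwise convergence of objectives does not by itself preserve arg-minima, so one must invoke the finiteness of $\mathcal{P}$ (equivalently, that the convergence is uniform over $\mathcal{P}$) together with the gap $\delta$ separating optimal from sub-optimal fanout values. One should also keep in mind that a $\pfanout$-minimizer need not exhaust all $\fanout$-minimizers — $\pfanout$ may break ties among them — but it is always contained in the set of $\fanout$-minimizers, which is all the statement requires. Finally, the reason the claim is phrased as a limit rather than an equality at $p = 1$ is exactly the $n_i(q)=0$ case: the contribution vanishes for all $p<1$, so no degenerate $0^{0}$ convention is needed along the way.
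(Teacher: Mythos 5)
Your argument is correct and its core is the same as the paper's: the per-bucket term $1-(1-p)^{n_i(q)}$ tends to the indicator of $n_i(q)>0$ as $p\to 1^{-}$, so $\pfanout$ converges pointwise to $\fanout$ for every fixed partition and query. The paper's proof stops at this termwise limit and leaves the passage from convergence of objectives to equivalence of minimizers implicit; your extra step---using the finiteness of the set of balanced partitions and the gap $\delta$ between optimal and sub-optimal fanout values to conclude that every $\pfanout$-minimizer is a $\fanout$-minimizer once $p$ is close enough to $1$---is a sound tightening of that informal limit statement rather than a genuinely different route.
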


\begin{proof}
    We write $\pfanout(q)$ for a query $q\in \Q$ as
    \begin{equation*}
    \sum_{i=1}^k \left(1 - (1-p)^{n_i(q)}\right) =
    \sum_{i=1}^k \left(1 - e^{n_i(q) \text{log}(1-p)} \right)
    \end{equation*}
    Now as $p \to 1$, $\text{log}(1-p)$ goes to negative infinity and the 
    exponential term is zero unless $n_i(q) = 0$ in which case it equals one. 
    Let $\delta(x) = 1$ if $x$ is true and $\delta(x)=0$, otherwise.
    In the limit, the above expression equals
    \begin{equation*}
    \sum_{i=1}^k \big( 1 - \delta(n_i(q) = 0) \big) =
    \sum_{i=1}^k \delta(n_i(q) > 0),
    \end{equation*}
    which is fanout of $q$.
\end{proof}

Next we show that optimizing $\pfanout$ as $p \to 0$ is equivalent to a graph partitioning problem
on an edge-weighted graph constructed from data vertices. 
For a pair of data vertices,
$u, v \in \D$, let $w(u, v)$ be the number of common queries shared by these data vertices, that is,
$w(u, v) = |\{q \in \Q : \{q, u\} \in E \text{ and } \{q, v\} \in E \}|$.
Consider a (complete) graph with vertex set $\D$ and let $w(u, v)$ be the weight of an edge between 
$u, v \in \D$.
For a given graph and a partition of its vertices, an \df{edge-cut} is the sum of edge weights between
vertices in different buckets.

\begin{lemma}
    \label{lm:p0}
    Minimizing $\pfanout$ in the limit as $p \to 0$ is equivalent to graph partitioning amongst the data vertices
    while minimizing weighted edge-cut, where the edge weight between $u \in \D$ and $v \in \D$ is given by $w(u, v)$.
\end{lemma}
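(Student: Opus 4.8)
The plan is to Taylor-expand $\pfanout(q)$ in $p$ around $p=0$ and locate the first order of the expansion that depends on the partition. Using $(1-p)^{n}=1-np+\binom{n}{2}p^{2}+O(p^{3})$ with $n=n_i(q)$,
\begin{align*}
\pfanout(q)&=\sum_{i=1}^{k}\Bigl(n_i(q)\,p-\binom{n_i(q)}{2}p^{2}+O(p^{3})\Bigr)\\
&=d(q)\,p-\Bigl(\sum_{i=1}^{k}\binom{n_i(q)}{2}\Bigr)p^{2}+O(p^{3}),
\end{align*}
where $d(q)=\sum_{i}n_i(q)$ is the degree of $q$ in $G$. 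The crucial point is that $d(q)$ is independent of the partition, so the coefficient of $p$ in $\frac{1}{|\Q|}\sum_{q}\pfanout(q)$ is a partition-independent constant, and the first partition-dependent coefficient is that of $p^{2}$, namely $-\frac{1}{|\Q|}\sum_{q}\sum_{i}\binom{n_i(q)}{2}$. (Equivalently, in the ensemble that retains each edge independently with probability $p$: a query retaining zero or one of its edges contributes to expected fanout in a partition-independent way, whereas a query retaining exactly two edges --- an event of probability $\Theta(p^{2})$ --- contributes one extra unit precisely when its two retained edges lie in different buckets.)

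Next I would identify $\sum_{q}\sum_{i}\binom{n_i(q)}{2}$ with a quantity on the weighted complete graph on $\D$. For a fixed $q$ and bucket $V_i$, $\binom{n_i(q)}{2}$ counts unordered pairs $\{u,v\}$ of data vertices both adjacent to $q$ and both lying in $V_i$; summing over $i$, then over $q$, and reorganising the sum so that the pair $\{u,v\}$ is the outer index gives
\[
\sum_{q\in\Q}\sum_{i=1}^{k}\binom{n_i(q)}{2}=\sum_{\substack{\{u,v\}\subseteq\D\\ \mathrm{bucket}(u)=\mathrm{bucket}(v)}}w(u,v),
\]
the total weight of edges that $P$ does \emph{not} cut. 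Since $\sum_{\{u,v\}\subseteq\D}w(u,v)$ is a constant independent of $P$, maximising this uncut weight is the same as minimising the weighted edge-cut. Hence, up to the additive $O(p)$ constant and the positive factor $p^{2}/|\Q|$, the objective $\pfanout$ agrees with the weighted edge-cut of $P$.

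To make ``in the limit $p\to0$'' precise I would argue as in the proof of Lemma~\ref{lm:p1}, but one order further: $\D$ and $\Q$ are finite, so there are finitely many partitions and, for each, $\pfanout$ is a polynomial in $p$. Two partitions with different $p^{2}$-coefficients are therefore ordered by $\pfanout$ according to that coefficient for all sufficiently small $p>0$; taking the smallest such threshold over the finitely many pairs yields a single $p_{0}>0$ so that for every $0<p<p_{0}$ every minimiser of $\pfanout$ minimises the weighted edge-cut, which is the asserted equivalence.

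The step that needs the most care is the middle one: the double-counting identity must be set up with unordered pairs (hence $\binom{\cdot}{2}$, not $n_i(q)^{2}$), buckets and queries with $n_i(q)\le1$ must be checked to contribute $0$, and one must verify that ``uncut weight $+$ cut weight $=$ total weight'' with the total invariant under $P$. The Taylor expansion and the finiteness argument are then routine, essentially mirroring the proof of Lemma~\ref{lm:p1}.
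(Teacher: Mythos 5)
Your proposal is correct and follows essentially the same route as the paper's proof: Taylor-expand $\pfanout$ at $p=0$, observe that the linear term is partition-independent, identify the $p^2$ coefficient via double counting of shared queries with the total within-bucket weight, and conclude that minimizing $\pfanout$ for small $p$ is minimizing the weighted edge-cut. The only differences are cosmetic --- you count unordered pairs with $\binom{n_i(q)}{2}$ where the paper uses $n_i(q)^2$ (the two differ by a partition-independent constant) --- plus an added, and welcome, finiteness argument making the ``limit as $p\to 0$'' statement precise, which the paper leaves informal.
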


\begin{proof}
    We begin from the definition of $\pfanout$ and consider the Taylor expansion
    around $p = 0$:
    
    \begin{alignat*}{2}
    \pfanout =& \sum_{q \in \Q} \sum_{i=1}^k \left(1 - (1-p)^{n_i(q)}\right) \nonumber \\
    =& \sum_{q \in \Q} \sum_{i=1}^k - \left( n_i(q)p + n_i(q)(n_i(q)-1) \frac{p^2}{2} + \Oh(p^3) \right) \nonumber\\
    =& C-\frac{p^2}{2} \sum_{q \in \Q} \sum_{i=1}^{k} n_i(q)^2 + \Oh(p^3)  \nonumber
    \end{alignat*}
    
    The first term, $C = \sum_{q \in \Q} \sum_{i=1}^{k} n_i(q)(\frac{p^2}{2}-p)$, is a constant proportional 
    to the number of edges in the graph. Thus, it is irrelevant to the minimization. The last
    term, $\Oh(p^3)$, can also be ignored for optimization when $p \to 0$.
    We simplify the second term further.
    
    \begin{eqnarray}
    -\frac{p^2}{2} \sum_{q \in \Q} \sum_{i=1}^{k} n_i(q)^2 &=& -\frac{p^2}{2} \sum_{i=1}^{k} \sum_{q \in \Q} n_i(q)^2 \nonumber \\
    &=& -\frac{p^2}{2} \sum_{i=1}^{k} \left( \sum_{u \in V_i} \sum_{v \in V_i} w(u, v) \right) \nonumber
    \end{eqnarray}
    
    Therefore, minimizing probabilistic fanout in the limit as $p \to 0$ is equivalent to maximizing the sum, taken 
    over all buckets, of edge weights between data vertices within the same bucket, or \df{maximizing within-bucket edge weights}.
    Alternatively, this is also equivalent to \df{minimizing intra-bucket edge weights}, that is,
    minimizing weighted edge-cut between buckets with edge weights given by $w(u, v)$.
\end{proof} 

This $p\rightarrow0$ limit is interesting because the resulting optimization is an instance of the clique-net model
suggested as a heuristic for hypergraph partitioning~\cite{alpert1995recent,CA99,alpert1996hybrid}. The idea is to convert
the hypergraph partitioning problem to the (simpler) graph partitioning problem. To this end, 
a hypergraph is transformed to an edge-weighted unipartite graph, $G^c$, on the same set of vertices, by adding a clique
amongst all pairs of vertices connected to a hyperedge. 
Multiple edges between a vertex pair in the resulting graph are combined by summing their respective weights.
The buckets produced by a graph partitioning algorithm on the new graph are then used as a solution
for hypergraph partitioning.

An obstacle for utilizing the clique-net model is the size of the resulting (unipartite) graph $G^c$. If there is a
hyperedge connecting $\Omega(n)$ vertices, then $G^c$ contains $\Omega(n^2)$ edges, even if the original hypergraph is 
sparse. Hence a common strategy is to use some variant of edge sampling to filter out edges with low weight in 
$G^c$~\cite{alpert1995recent,CA99,alpert1996hybrid}. Lemma~\ref{lm:p0} shows that this is unnecessary:
One can apply Algorithm~\ref{algo:bp} with a small value of fanout probability, $p$, for
solving the hypergraph partitioning problem in the clique-net model.

\paragraph*{Performing swaps.}
In order to iteratively swap vertices between buckets, we compute \df{move gains} for the vertices, that is,
the difference of the objective function after moving a vertex from its current bucket to another one.
For every vertex $v \in \D$, we compute $k$ values, referred to as $gain_i[v]$, indicating
the move gains to every bucket $1 \le i \le k$. Then every vertex chooses a
\df{target} bucket that corresponds to the highest move gain.  (For minimization, we select the bucket with the lowest move gain, or equivalently the highest negative move gain.)
This information is used to
calculate the number of vertices in bucket $i$ that chose target bucket $j$, denoted $S_{i,j}$, for all 
pairs of buckets $1 \le i, j \le k$. Ideally, we would move all these vertices from $i$ to $j$ to maximally
improve the objective.
However, to preserve balance across buckets, we exchange only $\min(S_{i,j}, S_{j,i})$ pairs of vertices between buckets $i$ and $j$. 

\begin{figure}[t]
    \centering
    \includegraphics[width=0.8\textwidth]{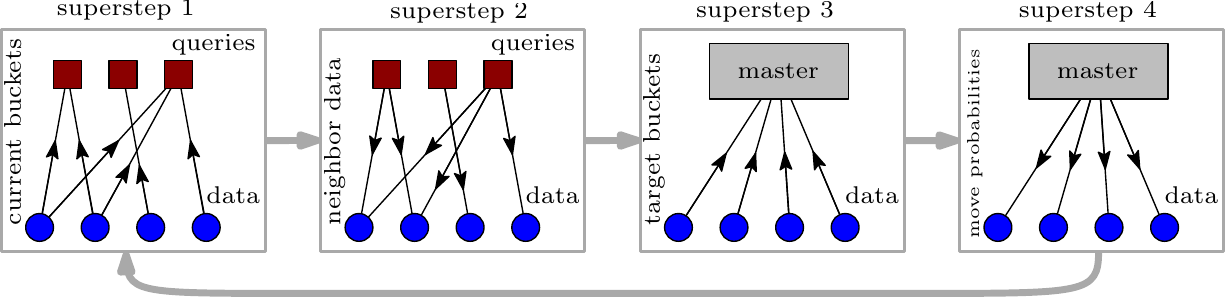}
    \caption{Distributed implementation of the fanout minimization algorithm in the vertex-centric framework
        with four supersteps and synchronization barriers between them:
        (1)~collecting query neighbor data, (2)~computing move gains, (3)~proposal of target buckets, 
        (4)~sending move probabilities and performing moves.}
    \label{fig:steps}
\end{figure}

In a distributed environment, we cannot pair vertices together for swaps easily, and so we instead perform the swaps in an approximate manner. Our implementation defines a probability for each vertex in bucket $i$ with
target bucket $j$ to be moved as $\frac{\min(S_{i,j}, S_{j,i})}{S_{i,j}}$. All vertices are then simultaneously moved to their target buckets respecting the computed probabilities, for all pairs of buckets.  With this choice of probabilities, the expected number of vertices moved from $i$ to $j$ and from $j$ to $i$ is the same; that is, the balance constraint is preserved in expectation. After that swap, we recompute move gains for all data vertices and proceed with the next iteration.

We compute the move gains as follows.  Assume that a vertex $v\in \D$ is moved from bucket $i$ to bucket $j$.
Our objective function, $\pfanout$, might change only for
queries adjacent to $v$ and for the terms corresponding to $V_i$ and $V_j$. Let $\mathcal{N}(v) \subseteq \Q$ be
the subset of queries adjacent to $v$. The move gain is then
\begin{equation}
\label{eq1}
\begin{split}
gain_j(v) & = \sum_{q \in \mathcal{N}(v)} \Big( \left(1-(1-p)^{n_i(q)-1}\right) + \left(1-(1-p)^{n_j(q)+1}\right) \Big) - \\
&  -  \sum_{q \in \mathcal{N}(v)} \Big( \left(1-(1-p)^{n_i(q)}\right) + \left(1-(1-p)^{n_j(q)} \right) \Big) = \\
& = \sum_{q \in \mathcal{N}(v)} \Big( -(1-p)^{n_i(q)-1} + (1-p)^{n_i(q)} -(1-p)^{n_j(q)+1} + (1-p)^{n_j(q)}\Big) \\
& = \sum_{q \in \mathcal{N}(v)} \Big( (1-p)^{n_i(q)-1} \cdot (-1+1-p) - (1-p)^{n_j(q)} \cdot (1-p+1) \Big) \\
& = p \cdot \sum_{q \in \mathcal{N}(v)} \Big((1-p)^{n_j(q)} - (1-p)^{n_i(q)-1}\Big).
\end{split}
\end{equation}

Next we provide details of the implementation.

\subsection{Implementation}
\label{sect:impl}
Our implementation relies on the vertex-centric programming model
and runs in the Giraph framework~\cite{giraph}. 
In Giraph, the input graph is stored as a collection of vertices that maintain
some local data (e.g., a list of adjacent vertices).
The vertices are distributed to multiple machines in a cluster and communicate
with each other via sending messages.
A computation in Giraph is split into supersteps that each consist
of the following processing steps: (i)~a vertex executes a user-defined function based on
local vertex data and on received messages, (ii)~the resulting output
is sent along outgoing edges. Note that since vertices operate only with local data,
such processing can easily be executed in parallel and in a distributed environment.
Supersteps end with a synchronization barrier, which
guarantees that messages sent in a given superstep are received at the beginning
of the next superstep. The whole computation is executed iteratively for a
certain number of rounds, or until a convergence property is met.

Algorithm~\ref{algo:bp} is implemented in the vertex-centric model in the following way; see Figure~\ref{fig:steps}.
The first two supersteps compute move gains for all data vertices. As can be seen from Equation~\ref{eq1},
a move gain of $v\in \D$ depends on the state of adjacent query vertices. Specifically, we need to know
the values $n_i(q)$ for every $q\in \Q$ and all buckets $1 \le i \le k$; we
call this information the \df{neighbor data} of query $q$. The first superstep
is used to collect the neighbor data; to this end, every data vertex sends its current bucket
to the adjacent queries, which aggregate the received messages into the neighbor data.
On the second superstep, the query vertices send their neighbor data back to adjacent
data vertices, which use this information to compute their move gains according to Equation~\ref{eq1}.

Once data vertices have computed move gains, we choose their target buckets and the next superstep aggregates this information in matrix $S$, that stores the number of candidate data vertices moving between pairs of buckets. The matrix is collected on a dedicated machine, called master, which computes move probabilities for the vertices. On the last superstep, the probabilities are propagated from master to all data vertices and the corresponding moves take effect.  This sequence of four supersteps continues until convergence.

\subsection{Complexity}
\label{sec:complexity}
Our primary consideration in designing the algorithm is to keep the implementation scalable to large instances. To this end, we limit
space, computational, and communication complexity such that each is bounded by $\Oh(k|E|)$, where $k$ is the number of buckets.

\paragraph*{Space complexity.}  We ensure every vertex $v$ of $G$ consumes only $\Oh(|\mathcal{N}(v)| + k)$
memory, where $|\mathcal{N}(v)|$ is the number of neighbors of vertex $v$, so the total memory consumption is $\Oh(|E| + k|V|)$. Every vertex keeps its adjacency list, which
is within the claimed limits. Additionally, every query vertex $q \in \Q$ stores its neighbor data containing 
$|fanout(q)| = \Oh(|\mathcal{N}(q)|)$ entries. Every data vertex $v\in \D$ stores move gains from its current
bucket to all other buckets, which includes up to $\Oh(k)$ values per vertex. The master machine stores
the information of move proposals for every pair of buckets, that is, its memory consumption is $\Oh(k^2) = \Oh(k|V|)$, which
is again within the claimed memory bound.  Notice that since Giraph distributes vertices among machines in a Giraph cluster randomly,
the scheme does not have a single memory bottleneck. All the machines are equally loaded, and in order to have enough memory for large graphs at a fixed number of buckets $k$, it is sufficient to increase the cluster size, while keeping the implementation unchanged.

\paragraph*{Computational complexity.} Our algorithm is within a bound of $\Oh(k|E|)$, assuming a constant number of refinement iterations. The computationally resource intensive steps are calculating the query neighbor data, which
is performed in $\Oh(|E|)$ time, and processing this information by data vertices. The latter step is bounded by
$\Oh(k|\mathcal{N}(v)|)$ steps for every $v\in \D$, as this is the amount of information being sent to $v$ in superstep 2.
Finally, computing and processing matrix $S$ requires $\Oh(|V|+k^2) = \Oh(k|V|)$ time.

\paragraph*{Communication complexity.}  Another important aspect of a distributed algorithm is its communication complexity, that is, the amount of messages
sent between machines during its execution. The supersteps $1$, $3$, and $4$ are relatively ``lightweight'' and require only
$|E|$, $|V|$, and $|V|$ messages of constant size, respectively. The ``heavy'' one is superstep~$2$ in
which every query vertex sends its neighbor data to all its neighbors. We can upper bound the amount of
sent information by $k|E|$, as neighbor data for every $q\in \Q$ contains $k$ entries. In practice, however, 
this is bounded by $\sum_{q \in \Q} fanout(q) \cdot |\mathcal{N}(q)|$, as the zero entries of neighbor data 
(having $n_i(q)=0$) need not be sent as messages. Hence, a reasonable estimation of the amount
of sent messages is $fanout \cdot |E|$ per iteration of Algorithm~\ref{algo:bp}, where $fanout$ is the average
fanout of queries on the current iteration.

We stress here that Giraph has several built-in optimizations that can further reduce the amount of sent and received messages.
For example, if exactly the same message is sent between a pair of machines several times (which might happen, for example, 
on superstep 2), then the messages are combined into a single one. Similarly, if a message is sent between 
two vertices residing on the same machine, then the message can be replaced with a read from the local memory of the machine.

Another straightforward optimization is to maintain some state of the vertices and only recompute the state between iterations 
of the algorithm when necessary. A natural candidate is the query neighbor data, which is recomputed only
when an adjacent data vertex is moved; if the data vertex stays in the same bucket on an iteration, then
it does not send messages on superstep 1 for the next iteration. Similarly, a data vertex $v \in \D$ may hold some state and recompute
move gains only in the case when necessary, that is, when another vertex $u \in \D$ with $\{q,v\} \in E$ and $\{q,u\} \in E$
for some $q \in \Q$ changes its bucket.

\paragraph*{Recursive partitioning.}
The discussion above suggests that Algorithm~\ref{algo:bp} is practical for small values of $k$, e.g., when $k=\Oh(1)$. In 
this case the complexity of the algorithm is linear in the size of the input hypergraph. However in some
applications, substantially more buckets are needed. In the extreme with $k=\Omega(|V|)$,
the implementation requires quadratic time and memory, and the run-time for large instances could be unreasonable even on a large Giraph cluster. 

A common solution to this problem in the hypergraph partitioning literature is to observe that the partitioning can be constructed recursively. In recursive partitioning, the algorithm splits data vertices into $r>1$ parts $V_1, \dots, V_r$. This splitting algorithm is recursively applied to all the graphs induced by vertices $\Q \cup V_1, \dots, \Q \cup V_r$ independently. The process continues
until we achieve $k$ buckets, which requires $\lceil \log_r k \rceil$ levels of recursion.
A typical strategy is to employ $r=2$; that is, recursive bisection~\cite{ST97,DKKOPS16}.

Notice that Algorithm~\ref{algo:bp} can be utilized for recursive partitioning with just a single modification.
At every recursion step, data vertices are constrained as to which buckets they are allowed to be moved to.
For example, at the first level of recursive bisection, all data vertices are split into $V_1$ and $V_2$.  At the second level, the vertices are split into four buckets $V_3, V_4, V_5, V_6$ so that the vertices $v \in V_1$ are
allowed to move between buckets $V_3$ and $V_4$, and the vertices $v \in V_2$ are allowed to moved between $V_5$ and $V_6$.
In general, the vertices of $V_i$ for $1 \le i \le k/2$ are split into $V_{2i+1}$ and $V_{2i+2}$.

An immediate implication of the constraint is that every 
data vertex only needs to compute $r$ move gains on each iteration.
Similarly, a query vertex needs to send only a subset of its neighbor data to data vertices that contains at most $r$ values.
Therefore, the memory requirement as well as the computational complexity of recursive partitioning is $\Oh(r|E|)$ per iteration, 
while the amount of messages sent on each iteration does not exceed $\Oh(r|E|)$, which is
a significant reduction over direct (non-recursive) partitioning when $r \ll k$. This improvement sometimes comes with the price of reduced quality; see Section~\ref{sect:exp} for a discussion. Accompanying the paper, we open-source a version that performs recursive bisection, as it is the most scalable.

\subsection{Advanced implementation}
\label{sect:details}
While the basic algorithm described above performs well, this subsection describes additional improvements that we have included in our implementation, motivated by our practical experience. 

First, randomly selecting vertices to swap between a pair of buckets $i$ and $j$ may not select those with the highest move gains to swap.  In the ideal serial implementation, we would have two queues of gains, one corresponding to vertices in bucket $i$ that want to move to $j$, and the other for vertices in bucket $j$ that want to move to $i$, sorted by move gain from highest to lowest.  We would then pair vertices off for swapping from highest to lowest.

This is difficult to implement exactly in a distributed environment.  Instead of maintaining two queues for each pair of buckets, we maintain two histograms that contain the number of vertices with move gains in exponentially sized bins.  We then match bins in the two histograms for maximal swapping with probability one, and then probabilistically pair the remaining vertices in the final matched bins.  In superstep 4, the master distributes move probabilities for each bin, most of which are either one or zero.  This change allows our implementation to focus on moving the most important gains first.  A further benefit is that we can allow non-positive move gains for target buckets.  A pair of positive and negative histogram bins can swap if the sum of the gains is expected to be positive, which frees up additional movement in the local search.

Additionally, we utilize the imbalance allowed by $\eps$ to consider imbalanced swaps.  For recursive partitioning, we typically do not want to allow $\varepsilon$ imbalance for the early recursive splits since that will substantially constrain movement at later steps of the recursion.  Instead, using $\varepsilon$ multiplied by the ratio of the current number of recursive splits to the final number of recursive splits works reasonably well in practice.  

Finally, when performing recursive partitioning, instead of optimizing $\pfanout$ for the current buckets, we approximately optimize for the final $\pfanout$ after all splits.  Consider a recursion step where an existing bucket will be eventually split into $t$ buckets.  If a query has $r$ neighbors in a particular bucket, this query-bucket pair gives a contribution to the current $\pfanout$ of $1 - (1-p)^r$.  We can (pessimistically) approximate the contribution of this pair to the final $\pfanout$ by assuming each of these $r$ neighbors has a probability of $1/t$ to end up in any of the $t$ final buckets.  Under this assumption, the contribution of this query to $\pfanout$ from a single final bucket is $(1-(1-p/t)^r)$, and summed across each of the $t$ buckets is $t \times (1 - (1 - p/t)^r$. 

\section{Experiments}
\label{sect:exp}
Here we describe our experiments that are designed to answer the following questions:
\begin{itemize}
    \item What is the impact of reducing fanout on query processing in a sharded database (Section~\ref{sec:ss})?
    
    \item How well does our algorithm perform in terms of quality and scalability compared to existing hypergraph partitioners
    (Sections~\ref{sec:qual} and \ref{sec:scal})?
    
    \item How do various parameters of our algorithm contribute to its performance and final result (Section~\ref{sec:params})?
    
\end{itemize}

\subsection{Datasets}

We use a collection of hypergraphs derived from large social networks and web graphs; see Table~\ref{table:dataset}.  We transform the input hypergraph into a bipartite graph, $G=(\Q \cup \D, E)$, as described in Section~\ref{sect:intro}. 

In addition, we use five large synthetically generated graphs that have similar characteristics as the Facebook friendship graph~\cite{ELWCK16}.
These generated graphs are a natural source of hypergraphs; in our storage sharding application, to render a profile-page of a Facebook user, one might want to fetch information about a user's friends.  Hence, every user of a social network serves both as query and as data in a bipartite graph.

In all experiments, isolated queries and queries of degree one (single-vertex hyperedges) are removed, since they do not contribute to the objective, having fanout equal to one in every partition.

\newcolumntype{R}{>{\raggedleft\arraybackslash}p{1.35cm}}

\begin{table}[!t]
    \centering
    \begin{tabular}{lrrrr}
        \toprule
        \centering hypergraph & source & \multicolumn{1}{c}{$|\Q|$} & \multicolumn{1}{c}{$|\D|$} & \multicolumn{1}{c}{$|E|$} \\
        \midrule
        \texttt{email-Enron}	  	& \cite{snap}  & $25,\!481$   & $36,\!692$   & $356,\!451$ \\
        \texttt{soc-Epinions} & \cite{snap} & $31,\!149$  & $75,\!879$   & $479,\!645$ \\
        \texttt{web-Stanford} & \cite{snap} & $253,\!097$  & $281,\!903$   & $2,\!283,\!863$ \\
        \texttt{web-BerkStan} & \cite{snap} & $609,\!527$  & $685,\!230$   & $7,\!529,\!636$ \\
        \texttt{soc-Pokec}    & \cite{snap}     & $1,\!277,\!002$  & $1,\!632,\!803$   & $30,\!466,\!873$ \\
        \texttt{soc-LJ} & \cite{snap}  & $3,\!392,\!317$  & $4,\!847,\!571$   & $68,\!077,\!638$ \\
        \texttt{FB-10M}	  	  & \cite{ELWCK16} & $32,\!296$ & $32,\!770$	 & $10,\!099,\!740$ \\
        \texttt{FB-50M}	  	  & \cite{ELWCK16} & $152,\!263$ & $154,\!551$	 & $49,\!998,\!426$ \\
        \texttt{FB-2B}	  	  & \cite{ELWCK16} & $6,\!063,\!442$ & $6,\!153,\!846$	 & $2 \times 10^9$ \\
        \texttt{FB-5B}	  	  & \cite{ELWCK16} & $15,\!150,\!402$ & $15,\!376,\!099$ & $5 \times 10^9$ \\
        \texttt{FB-10B}	  	  & \cite{ELWCK16} & $30,\!302,\!615$ & $40,\!361,\!708$ & $10 \times 10^9$ \\
        \bottomrule
    \end{tabular}
    \caption{Properties of hypergraphs used in our experiments.}
    \label{table:dataset}
\end{table}

\subsection{Evaluation}
We evaluate two versions of our algorithm, direct partitioning into $k$ buckets (\texttt{SHP-k}) and recursive
bisection with $\log_2 k$ levels (\texttt{SHP-2}). The algorithms are
implemented in Java and \texttt{SHP-2} is available at~\cite{giraphgit}.  Both versions can be run in a 
single-machine environment using one or several threads running in parallel, or in
a distributed environment using a Giraph cluster.
Throughout the section, we use imbalance ratio $\eps=0.05$.

\subsubsection{Storage Sharding}
\label{sec:ss}
Here we argue and experimentally demonstrate that fanout is a suitable objective function for our primary application, 
storage sharding. We refer 
the reader to the description of the Social Hash framework~\cite{SH16} for more detailed evaluation of the system and other 
applications of \texttt{SHP}.

In many applications, queries issue requests to multiple storage servers, and they do so in parallel.
As such, the latency of a multi-get query is determined by the slowest request. 
By reducing fanout, the probability of encountering a request that is unexpectedly slower than the others 
is reduced, thus reducing the latency of the query. 
This is the fundamental argument for using fanout as the objective function for the assignment problem 
in the context of storage sharding. We ran a simple experiment to confirm our understanding of the relationship 
between fanout and latency. We issued trivial remote requests and measured (i)~the latency of a single request 
($\fanout=1$) and (ii)~the latency of several requests sent in parallel ($\fanout > 1$) (that is,
the maximum over the latencies of single requests).
Figure~\ref{fig:lat_fan_1} shows the results of this experiment and
illustrates the dependency between various percentiles of multi-get query latency and fanout of the query.
The observed latencies match our expectations and indicate that reducing fanout is important
for database sharding; for example, one can almost half the average latency by reducing fanout from $40$ to $10$.

\begin{figure}[!t]
    \centering
    \begin{minipage}[b]{0.98\textwidth}
        
    \begin{subfigure}[t]{0.49\textwidth}
        \includegraphics[width=\columnwidth]{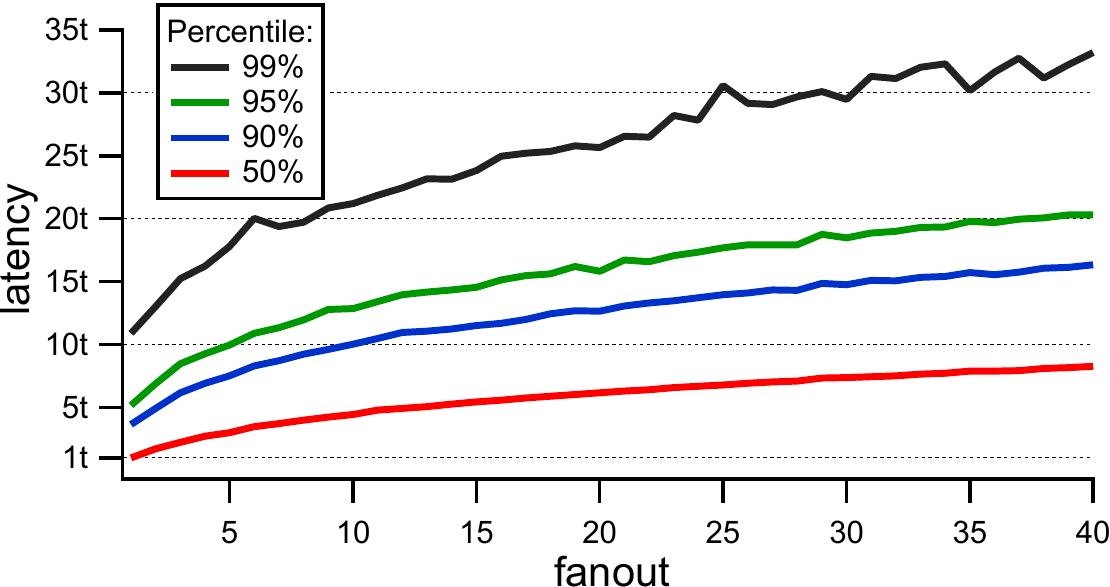}
        \caption{Synthetic queries}
        \label{fig:lat_fan_1}
    \end{subfigure}
    \hfill
    \begin{subfigure}[t]{0.49\textwidth}
        \includegraphics[width=\columnwidth]{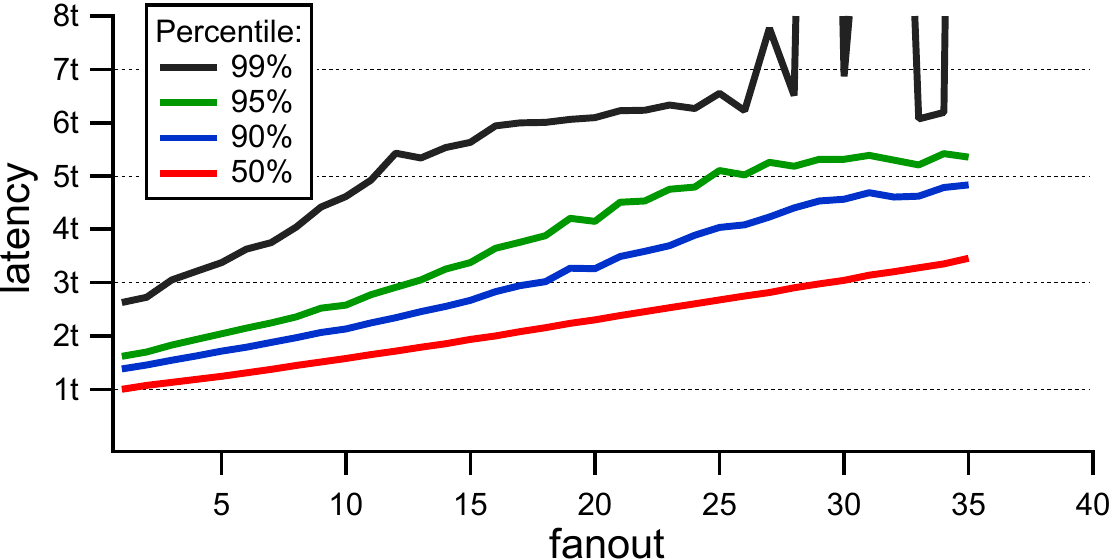}
        \caption{Real-world queries}
        \label{fig:lat_fan_2}
    \end{subfigure}
    \end{minipage}

    \caption{Distribution of latency for multi-get queries with various fanout, 
        where $t$ is the average latency of a single call.}
    \label{fig:multiget_latency_vs_fanout}
\end{figure}

There are several possible caveats to our analysis of the relationship between fanout and latency in the
simplistic experiment. For example, reducing fanout generally 
increases the size of the largest request to a server, which could increase latency.
With this in mind, we conduct a more realistic experiment with $40$ servers storing a subset of the Facebook friendship graph.
For the experiment, the data is stored in a memory-based, key-value store, and there is one data record per user.
In order to shard the data, we minimize fanout using our \texttt{SHP} algorithm applied for the graph.
We sample a live traffic pattern, and issued the same set of queries, while measuring fanout and latency of each
query. The dependency between fanout and latency are shown in Figure~\ref{fig:lat_fan_2}.
Here the queries needed to issue requests to only $9.9$ servers on average. Notice that we do not include 
measurements for $\fanout > 35$, as there are very few such queries; this also explain several ``bumps'' in latency for
queries with large fanout. The results demonstrate that decreasing fanout from $40$ (corresponding to a ``random'' sharding)
to $10$ (``social'' sharding) yields a $2x$ lower average latency for the queries, which
agrees with the results of the earlier experiment.

Finally, we mention that after we deployed storage sharding optimized with \texttt{SHP} to one of the graph databases at Facebook, containing thousands of storage servers, we found that measured latencies of queries decreased by over $50\%$ on average, and CPU utilization also decreased by over $50\%$; see \cite{SH16} for more details.

\subsubsection{Quality}
\label{sec:qual}
Next we compare the quality of our algorithm as measured by the optimized
fanout with existing hypergraph partitioners. We identified the following
tools for hypergraph partitioning whose implementations are publicly available:
\texttt{hMetis}~\cite{KAKS99}, \texttt{Mondriaan}~\cite{VB05}, 
\texttt{Parkway}~\cite{TK08},
\texttt{PaToH}~\cite{CA99}, 
and \texttt{Zoltan}~\cite{DBHBC06}. These are tools that can process hypergraphs and can optimize fanout (or the closely related sum of external degrees) as the objective function.
Unfortunately, one of the best (according to the recent DIMACS Implementation Challenge~\cite{BMSW13}) single-machine hypergraph packages, UMPa~\cite{DKUC15}, is not publicly available, 
and hence we do not include it in the evaluation.

For a fair comparison, we set allowed imbalance $\eps=0.05$ and used default optimization flags for all partitioners.  We also require that partitions can be
computed with $10$ hours without errors.  We computed the fanout of partitions produced with these algorithms for various numbers of buckets, $k$, and found that \texttt{hMetis} and \texttt{PaToH} generally produced higher fanout than the other partitioners on our hypergraphs.  So for clarity, we focus on results from \texttt{SHP-2} and \texttt{SHP-k}, along with \texttt{Mondriaan}, \texttt{Parkway}, and \texttt{Zoltan}.

\begin{table}[!ht]
    \begin{minipage}[t][][b]{0.5\linewidth}
        \vspace{0.5cm}
        \includegraphics[width=0.99\textwidth]{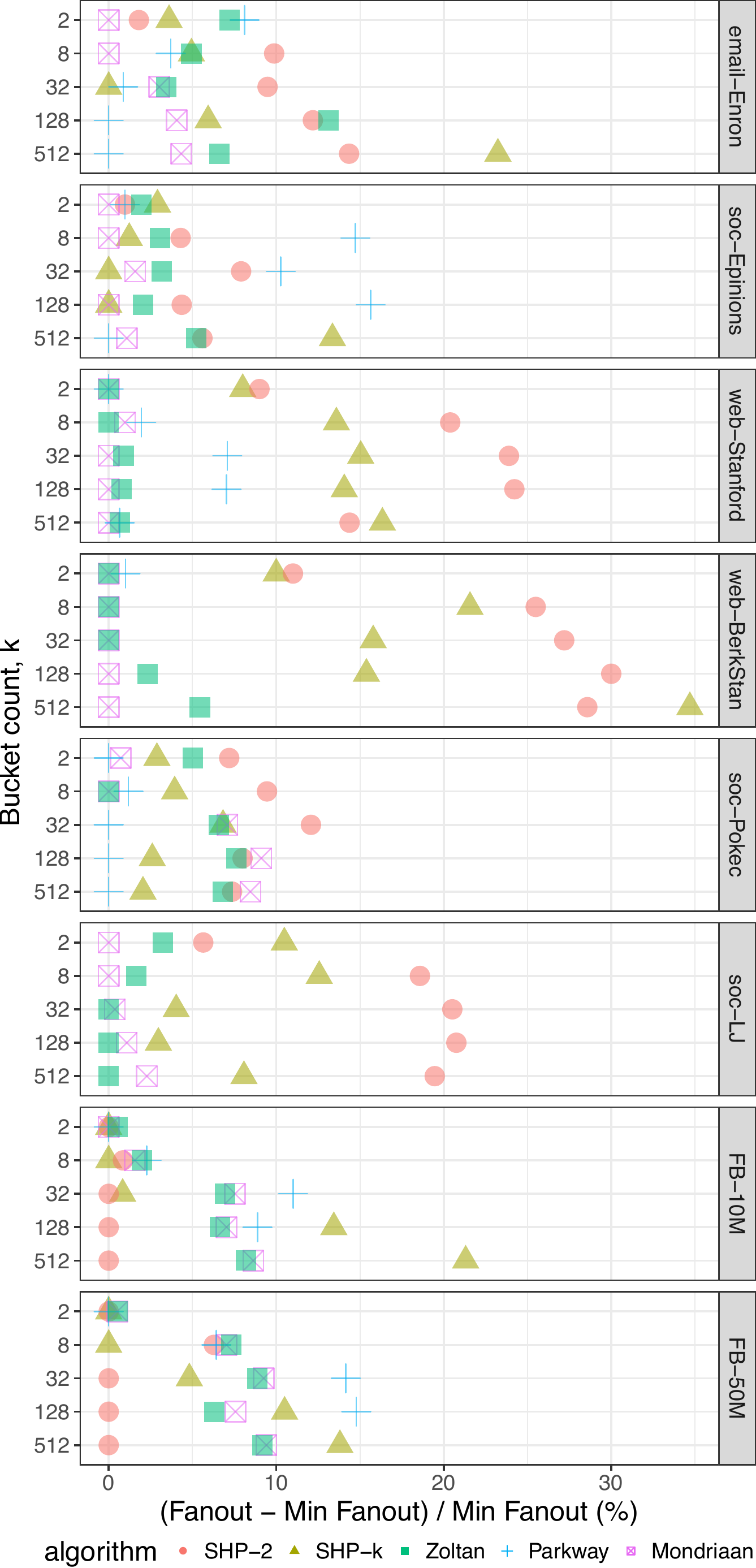}
    \end{minipage}
    \hfill    
    \begin{varwidth}[t][][b]{0.5\linewidth}
        \scriptsize
        \begin{tabular}{lrrrrr}
            \toprule
            algorithm
            & \multicolumn{1}{c}{$2$} & \multicolumn{1}{c}{$8$} & \multicolumn{1}{c}{$32$} & \multicolumn{1}{c}{$128$} & \multicolumn{1}{c}{$512$}\\
            \midrule
            \texttt{SHP-k} & $1.15$ & $1.7$ & $2.32$ & $3.39$ & $5.41$ \\
            \texttt{SHP-2} & $1.13$ & $1.78$ & $2.54$ & $3.59$ & $5.02$ \\
            \texttt{Mondriaan} & $1.11$ & $1.62$ & $2.39$ & $3.33$ & $4.58$ \\
            \texttt{Parkway} & $1.20$ & $1.68$ & $2.34$ & $3.20$ & $4.39$ \\
            \texttt{Zoltan} & $1.19$ & $1.7$ & $2.40$ & $3.62$ & $4.68$ \\[0.12cm]
            \midrule
            \texttt{SHP-k} & $1.06$ & $1.65$ & $2.53$ & $3.90$ & $6.28$ \\
            \texttt{SHP-2} & $1.04$ & $1.70$ & $2.73$ & $4.07$ & $5.85$ \\
            \texttt{Mondriaan} & $1.03$ & $1.63$ & $2.57$ & $3.90$ & $5.60$ \\
            \texttt{Parkway} & $1.04$ & $1.87$ & $2.79$ & $4.51$ & $5.54$ \\
            \texttt{Zoltan} & $1.05$ & $1.68$ & $2.61$ & $3.98$ & $5.83$ \\[0.12cm]
            \midrule
            \texttt{SHP-k} & $1.08$ & $1.17$ & $1.30$ & $1.46$ & $1.78$ \\
            \texttt{SHP-2} & $1.09$ & $1.24$ & $1.40$ & $1.59$ & $1.75$ \\
            \texttt{Mondriaan} & $1.01$ & $1.04$ & $1.13$ & $1.28$ & $1.53$ \\
            \texttt{Parkway} & $1.01$ & $1.05$ & $1.21$ & $1.37$ & $1.54$ \\
            \texttt{Zoltan} & $1.01$ & $1.03$ & $1.14$ & $1.29$ & $1.54$ \\[0.12cm]
            \midrule
            \texttt{SHP-k} & $1.10$ & $1.24$ & $1.32$ & $1.50$ & $1.98$ \\
            \texttt{SHP-2} & $1.11$ & $1.28$ & $1.45$ & $1.69$ & $1.89$ \\
            \texttt{Mondriaan} & $1.00$ & $1.02$ & $1.14$ & $1.30$ & $1.47$ \\
            \texttt{Parkway} & $1.01$ & & & & \\
            \texttt{Zoltan} & $1.00$ & $1.02$ & $1.14$ & $1.33$ & $1.55$ \\[0.12cm]
            \midrule
            \texttt{SHP-k} & $1.43$ & $2.64$ & $4.07$ & $5.52$ & $7.48$ \\
            \texttt{SHP-2} & $1.49$ & $2.78$ & $4.27$ & $5.81$ & $7.87$ \\
            \texttt{Mondriaan} & $1.40$ & $2.54$ & $4.08$ & $5.87$ & $7.95$ \\
            \texttt{Parkway} & $1.39$ & $2.57$ & $3.81$ & $5.38$ & $7.33$ \\
            \texttt{Zoltan} & $1.46$ & $2.54$ & $4.06$ & $5.79$ & $7.83$ \\[0.12cm]
            \midrule
            \texttt{SHP-k} & $1.37$ & $2.06$ & $2.84$ & $3.82$ & $5.22$ \\
            \texttt{SHP-2} & $1.31$ & $2.17$ & $3.29$ & $4.48$ & $5.77$ \\
            \texttt{Mondriaan} & $1.24$ & $1.83$ & $2.74$ & $3.75$ & $4.94$ \\
            \texttt{Parkway} & & & & & \\
            \texttt{Zoltan} & $1.28$ & $1.86$ & $2.73$ & $3.71$ & $4.83$ \\[0.12cm]
            \midrule
            \texttt{SHP-k} & $1.93$ & $7.04$ & $21.81$ & $61.45$ & $125.7$ \\
            \texttt{SHP-2} & $1.93$ & $7.10$ & $21.62$ & $54.17$ & $103.62$ \\
            \texttt{Mondriaan} & $1.93$ & $7.15$ & $23.25$ & $57.98$ & $112.56$ \\
            \texttt{Parkway} & $1.93$ & $7.20$ & $24.02$ & $58.98$ & \\
            \texttt{Zoltan} & $1.94$ & $7.18$ & $23.12$ & $57.77$ & $112.10$ \\[0.12cm]
            \midrule
            \texttt{SHP-k} & $1.93$ & $6.69$ & $22.43$ & $59.88$ & $117.16$ \\
            \texttt{SHP-2} & $1.93$ & $7.11$ & $21.40$ & $54.19$ & $102.95$ \\
            \texttt{Mondriaan} & $1.94$ & $7.16$ & $23.38$ & $58.29$ & $112.64$ \\
            \texttt{Parkway} & $1.93$ & $7.12$ & $24.43$ & $62.20$ &  \\
            \texttt{Zoltan} & $1.94$ & $7.18$ & $23.30$ & $57.61$ & $112.4$ \\[0.12cm]
            \bottomrule
        \end{tabular}
    \end{varwidth}%
    \caption{Fanout optimization of different single-machine partitioners across hypergraphs from 
    Table~\ref{table:dataset} for $k \in \{2, 8, 32, 128, 512\}$: (left)~relative quality over the lowest computed fanout, 
    and (right)~raw values of fanout. The shown results are the ones computed on a single machine within $10$ hours without failures.}
    \label{fig:quality_min_comparison}
\end{table}

Figure~\ref{fig:quality_min_comparison} compares the fanout of partitions produced by these
algorithms for various bucket count against the minimum fanout partition produced across all algorithms.  No partitioner is consistently the best across all hypergraphs and bucket count.  However, \texttt{Zoltan} and \texttt{Mondriaan} generally produce high quality partitions in all circumstances.  

\texttt{SHP-2}, \texttt{SHP-k} and to a lesser extent \texttt{Parkway} are more inconsistent.  For example, both versions of \texttt{SHP} have around a $10-30\%$ percentage increased fanout, depending on the bucket count, over the minimum fanout on the web hypergraphs.  On the other hand, \texttt{SHP} generally performs well on \texttt{FB-10M} and \texttt{FB-50M}, likely the hypergraphs closest to our storage sharding application. 

We also note a trade-off between quality and speed for recursive bisection compared to direct $k$-way optimization. The fanout from \texttt{SHP-2} is typically, but not always, $5-10\%$ larger than \texttt{SHP-k}. 

Because no partitioner consistently provides the lowest fanout, we conclude that using all partitioners and taking the lowest fanout across all partitions is an attractive option if possible.  However, as we shall see in the next section, \texttt{SHP-2} may be the only option for truly large hypergraphs due to its superior scalability.

\subsubsection{Scalability}
\label{sec:scal}
In this section, we evaluate \texttt{SHP}'s performance in a distributed setting using the four largest hypergraphs in Table~\ref{table:dataset}. We use 4 machines for our experiments each having the same configuration: Intel(R) Xeon(R) CPU E5-2660~@~2.20GHz with 144GB RAM.  

First, we numerically verify the computational complexity estimates from Section~\ref{sec:complexity}.  
Figure~\ref{fig:cpu_time} shows \texttt{SHP-2}'s total wall time as a function of the number of hyperedges across 
the four largest hypergraphs.
Notice that the y-axis is on a log-scale in the figure. The data verifies that \texttt{SHP-2}'s computational 
complexity is  $\Oh(\log k |E|)$, as predicted.

Next we analyze scalability of our approach with various number of worker machines in a cluster.
Figure~\ref{fig:cpu_time2}(left) illustrates the run-time of \texttt{SHP-2} using $4$, $8$, and $16$
machines; Figure~\ref{fig:cpu_time2}(right) illustrates the total wall time using the same configuration.
While there is a reduction in the run-time, the speedup is not proportional to the ratio
of added machines. We explain this by an increased communication between the machines, which
contributes to the performance of our algorithm.

\begin{figure}[!t]
    \centering
    
    \begin{minipage}[b]{\textwidth}
        
    \begin{subfigure}[t]{0.44\textwidth}
        \includegraphics[width=\columnwidth]{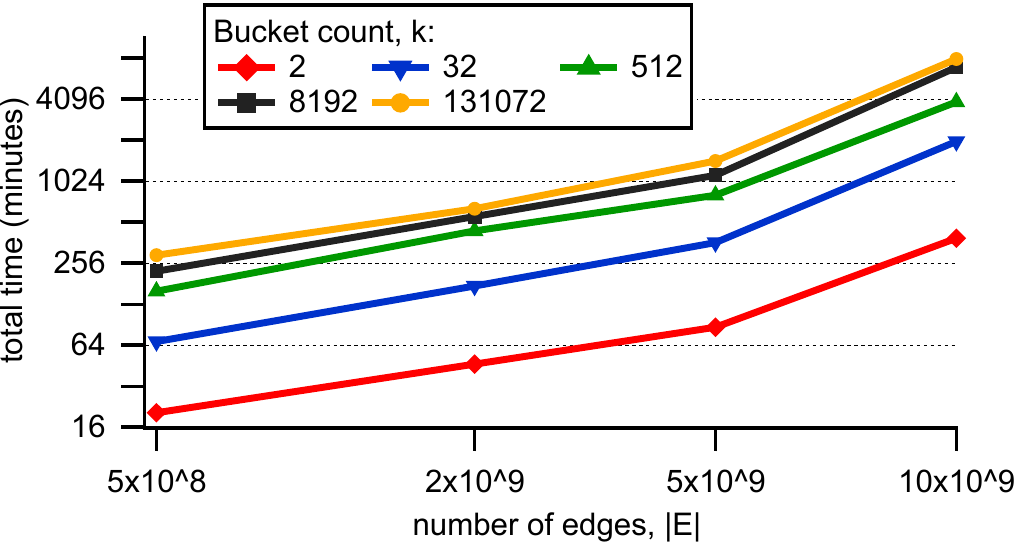}    
        \caption{Total time in a cluster with $4$ machines}
        \label{fig:cpu_time}
    \end{subfigure}
    \hfill
    \begin{subfigure}[t]{0.54\textwidth}
        \includegraphics[width=\columnwidth]{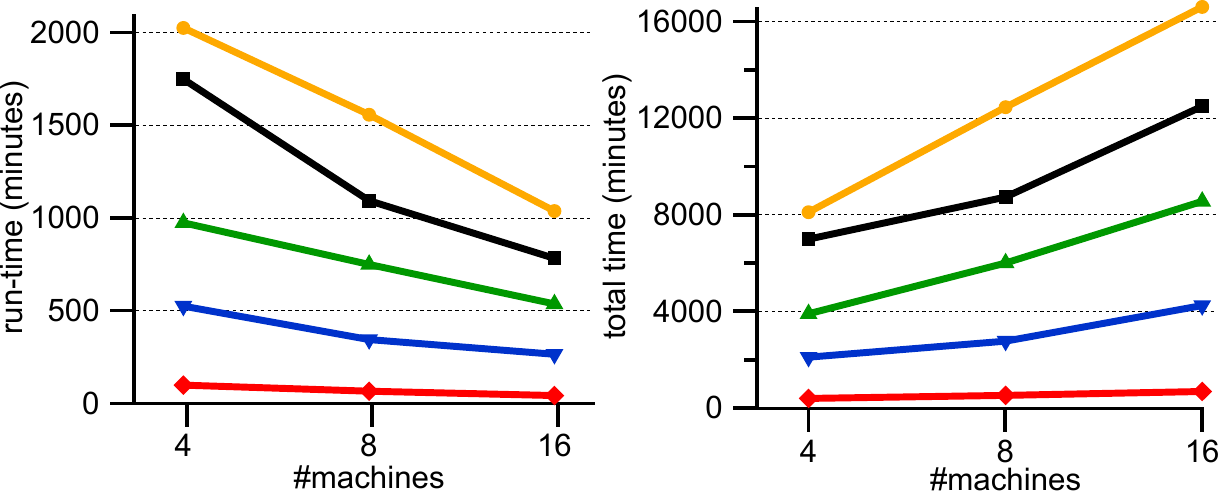}    
        \caption{Run-time and total time in a cluster with various
            number of machines on $\texttt{FB-10B}$}
        \label{fig:cpu_time2}
    \end{subfigure}
    
    \end{minipage}
        
    \caption{Scalability of \texttt{SHP-2} in a distributed setting for $k \in \{2, 32, 512, 8192, 131072\}$
        across largest hypergraphs from Table~\ref{table:dataset}.        
        The run-time is the time of processing a hypergraph in a cluster; the total time is the
        processing time of a single machines multiplied by the number of machines.}
\end{figure}

Now we compare SHP's performance to the two existing distributed partitioning packages (\texttt{Parkway} and \texttt{Zoltan}).  The comparison with \texttt{Zoltan} is particularly relevant since it provides generally lower fanout in the quality comparison.

Figure~\ref{fig:cpu_time_comparison} shows the run-time of these partitioners in minutes across the hypergraphs and various bucket count.  
If the result was not computed within $10$ hours without errors, we display the maximum value in the figure.  
Parkway only successfully ran on one of these graphs within the time allotted, because it runs out of memory on the other hypergraphs in the $4$-machine setting.  
Similarly, \texttt{Zoltan} also failed to partition hypergraphs larger than \texttt{soc-LJ}.  
On the other hand, \texttt{SHP-k} ran on \texttt{FB-10B} for $32$ buckets, and only \texttt{SHP-2} was able to successfully run on all tests.

\begin{table}[!ht]
    \begin{minipage}[t][][b]{0.5\linewidth}
        \vspace{0.7cm}
        \includegraphics[width=0.99\textwidth]{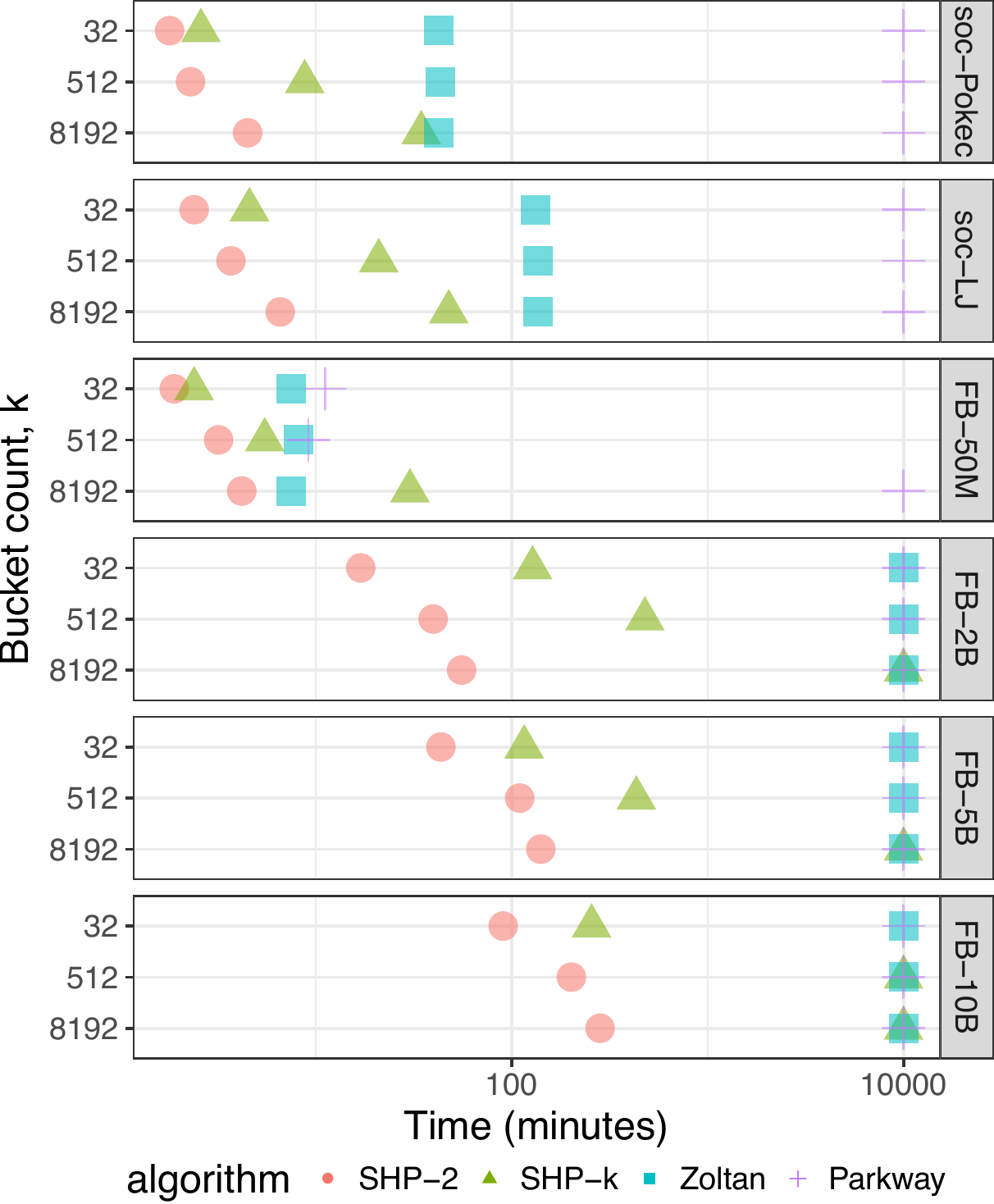}
    \end{minipage}
    \hfill    
    \begin{varwidth}[t][][b]{0.5\linewidth}
        \scriptsize
        \begin{tabular}{l@{\qquad}@{\qquad}r@{\qquad}@{\qquad}r@{\qquad}@{\qquad}r}
            \toprule
            algorithm
            & \multicolumn{1}{c}{$32$} & \multicolumn{1}{c}{$512$} & \multicolumn{1}{c}{$8192$} \\
            \midrule
            \texttt{SHP-k} & $2.6$ & $8.8$ & $34.6$ \\
            \texttt{SHP-2} & $1.8$ & $2.3$ & $4.5$ \\
            \texttt{Parkway} & & & \\
            \texttt{Zoltan} & $42.7$ & $43.4$ & $42.6$ \\
            \midrule
            \texttt{SHP-k} & $4.6$ & $21$ & $47.8$ \\
            \texttt{SHP-2} & $2.4$ & $3.7$ & $6.6$ \\
            \texttt{Parkway} & & & \\
            \texttt{Zoltan} & $133$ & $136$ & $136$ \\
            \midrule
            \texttt{SHP-k} & $2.4$ & $5.5$ & $30.3$ \\
            \texttt{SHP-2} & $1.9$ & $3.2$ & $4.2$ \\
            \texttt{Parkway} & $11.2$ & $9.21$ & \\
            \texttt{Zoltan} & $7.5$ & $8.2$ & $7.5$ \\
            \midrule
            \texttt{SHP-k} & $128$ & $479$ & \\
            \texttt{SHP-2} & $17$ & $39.8$ & $55.6$ \\
            \texttt{Parkway} & & & \\
            \texttt{Zoltan} & & & \\
            \midrule
            \texttt{SHP-k} & $116$ & $433$ & \\
            \texttt{SHP-2} & $43.6$ & $110$ & $141$ \\
            \texttt{Parkway} & & & \\
            \texttt{Zoltan} & & & \\
            \midrule
            \texttt{SHP-k} & $256$ & & \\
            \texttt{SHP-2} & $90.6$ & $202$ & $283$ \\
            \texttt{Parkway} & & & \\
            \texttt{Zoltan} & & & \\
            \bottomrule
        \end{tabular}
    \end{varwidth}%
    \caption{Run-time in minutes of distributed hypergraph partitioners across hypergraphs from Table~\ref{table:dataset} for 
    $k \in \{32, 512, 8192\}$: (left)~visual representation, and (right)~raw values.
     All tests run on 4 machines. Partitioners that fail process an instance or if their run-time exceeds $10$ hours are shown with the maximum value.}
    \label{fig:cpu_time_comparison}
\end{table}

Further, note that the x-axis is on a log-scale in Figure~\ref{fig:cpu_time_comparison}. So \texttt{SHP} can not only run on larger graphs with more buckets than \texttt{Zoltan} and \texttt{Parkway} on these $4$ machines, the run-time is generally substantially faster.  \texttt{SHP-2} finished partitioning every hypergraph in less than $5$ hours, and for the hypergraphs on which \texttt{SHP-k} succeeded, it ran less than $8$ hours.

While not observable in Figure~\ref{fig:cpu_time_comparison}, \texttt{Zoltan}'s run-time was largely independent of the bucket count, such that for $8192$ buckets on \texttt{FB-50M} it was faster than \texttt{SHP-k}.  This is a relatively rare case, and typically \texttt{SHP-k}, despite having a run-time that scales linearly with bucket count, was faster in our experiments.  While in all examples \texttt{Zoltan} was much slower than \texttt{SHP-2}, for a division of a small hypergraph into a very large number of buckets, \texttt{Zoltan} could conceivably be faster, since \texttt{SHP-2}'s run-time scales logarithmically with bucket count. 

\subsubsection{Parameters of \texttt{SHP}}
\label{sec:params}
There are two  parameters affecting Algorithm~\ref{algo:bp}: the fanout probability and
the number of refinement iterations. To investigate the effect of these parameters, 
we apply \texttt{SHP-2} for various values of $0 < p < 1$; see Figure~\ref{fig:prob} illustrating the resulting percentage reduction in
(non-probabilistic) fanout on \texttt{soc-Pokec}. Values between $0.4 \le p \le 0.8$ tend to produce the lowest fanout, with $p=0.5$ being
a reasonable default choice for all tested values of bucket count $k$. The point $p=1$ in the figure corresponds to optimizing fanout directly with \texttt{SHP-2}, and yields worse results than $p=0.5$. 

\begin{figure}[!t] 
    \centering
    \includegraphics[width=0.52\textwidth]{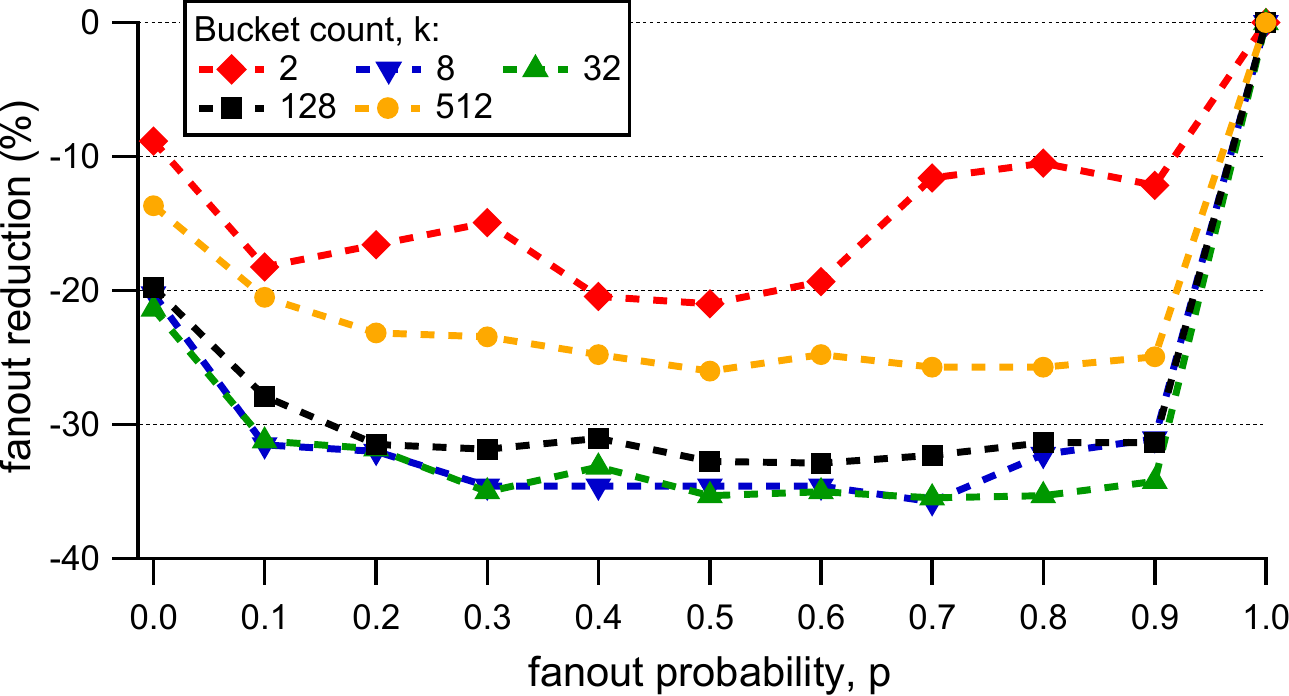}
    \caption{Fanout optimization with \texttt{SHP-2} on \texttt{soc-Pokec} as a function of 
        fanout probability, $p$, for different bucket counts.}
    \label{fig:prob}
\end{figure}    

\begin{figure}[!t] 
    \centering
    \begin{minipage}[b]{0.99\textwidth}
        \begin{subfigure}[t]{0.49\textwidth}
            \includegraphics[width=\columnwidth]{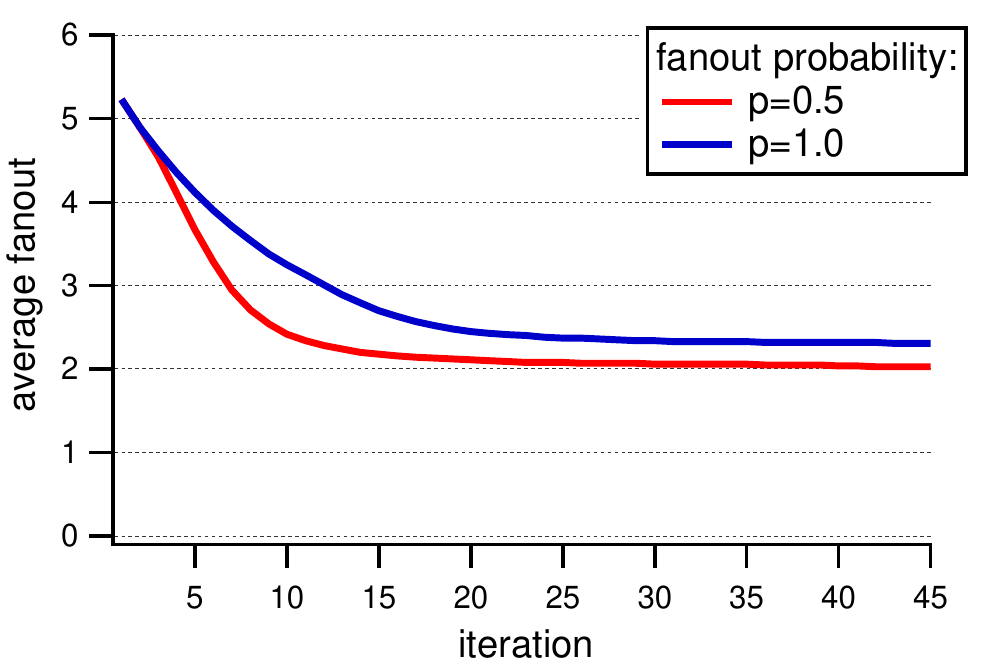}
            \caption{progress of fanout}
        \end{subfigure}
        \hfill
        \begin{subfigure}[t]{0.49\textwidth}
            \includegraphics[width=\columnwidth]{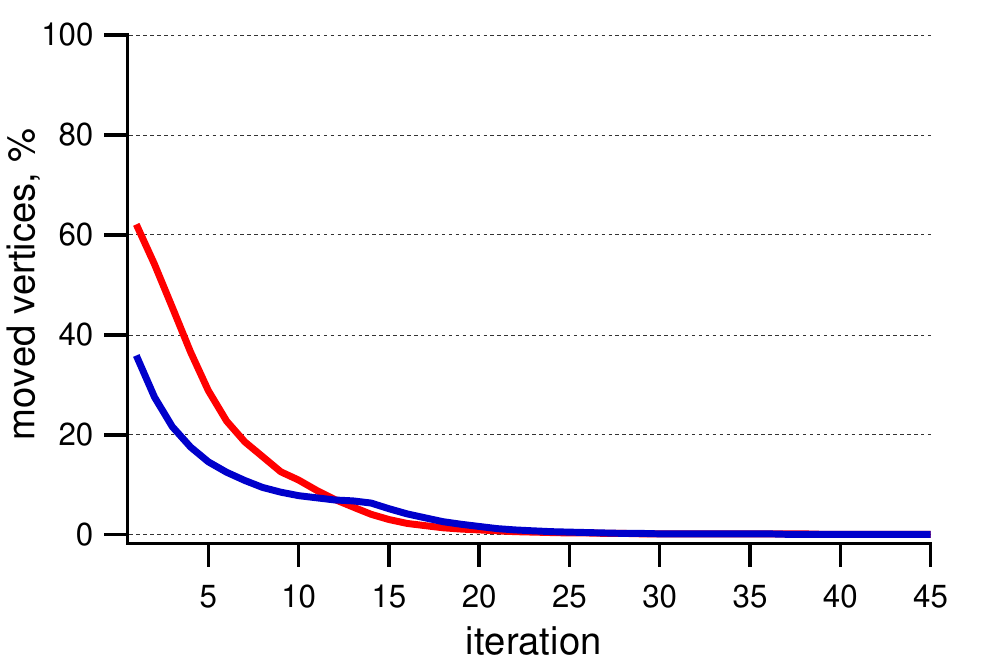}
            \caption{percentage of moved vertices per iteration}
        \end{subfigure}
    \end{minipage}
    \caption{Progress of fanout optimization with \texttt{SHP-k} for $p=0.5$ and $p=1.0$ on \texttt{soc-LJ} using $k=8$ buckets.}
    \label{fig:converge}
\end{figure}

One explanation, as mentioned in Section~\ref{sect:algo}, is that the local search is more likely to land in a local minimum with $p=1$. This is illustrated in Figure~\ref{fig:converge} for \texttt{SHP-k}, where the number of moved vertices per iteration on \texttt{soc-LJ} is significantly lower for $p=1$ than for $p=0.5$ at earlier iterations. The number of moved vertices for $p=0.5$ is below $0.1\%$ after iteration $35$; this number is below $0.01\%$ after iteration $49$. 
This behavior on \texttt{soc-Pokec} and \texttt{soc-LJ} for \texttt{SHP} was observed typical across many hypergraphs, and motivates our default parameters.  We set $p=0.5$ as the default for $\pfanout$, $60$ for the maximum number of refinement iterations of \texttt{SHP-k}, and $20$ iterations per bisection for \texttt{SHP-2}.

Figure~\ref{fig:fanout_prob} quantifies the impact of using probabilistic fanout for \texttt{SHP-2} across a variety of hypergraphs for $2$, $8$, and $32$ buckets. The left-hand plot displays the substantial percentage increases in fanout caused by using direct fanout optimization over the $p=0.5$ probabilistic fanout optimization.  Similar behavior is seen with \texttt{SHP-k}, and these results demonstrate the importance of using probabilistic fanout for \texttt{SHP}.  On average across these hypergraphs, direct fanout optimization would produce fanout values $45\%$ larger than probabilistic fanout with $p=0.5$.

\begin{figure}[!t] 
    \centering
    \begin{minipage}[b]{0.98\textwidth}    
        \begin{subfigure}[t]{0.49\textwidth}
            \includegraphics[width=\columnwidth]{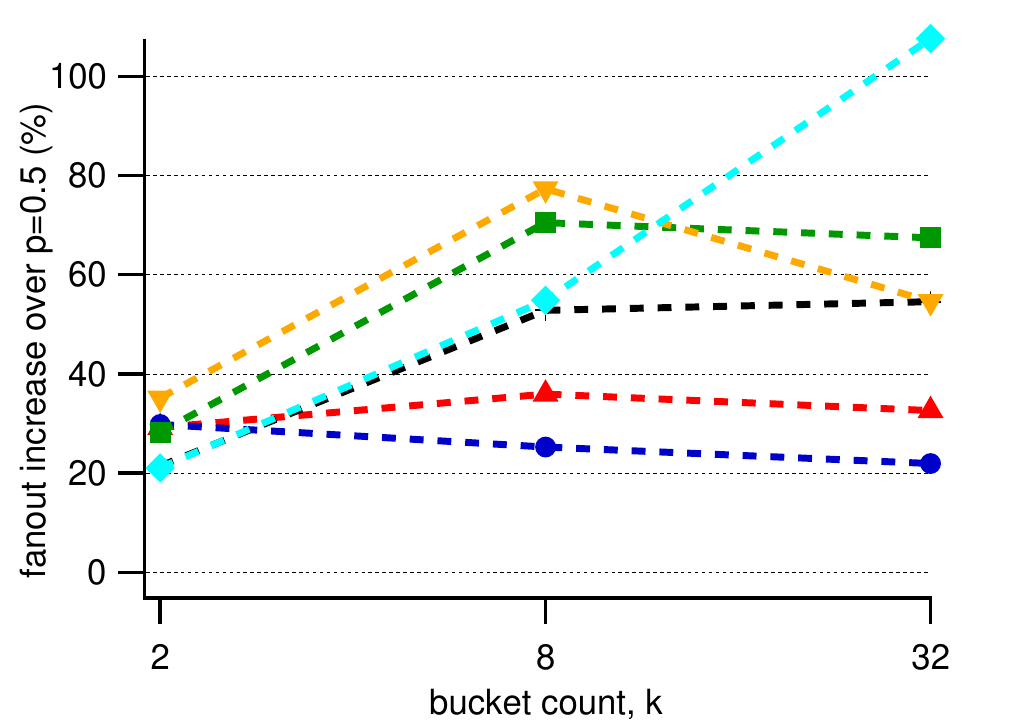}
            \caption{Direct fanout optimization, $p=1.0$}
        \end{subfigure}
        \hfill
        \begin{subfigure}[t]{0.49\textwidth}
            \includegraphics[width=\columnwidth]{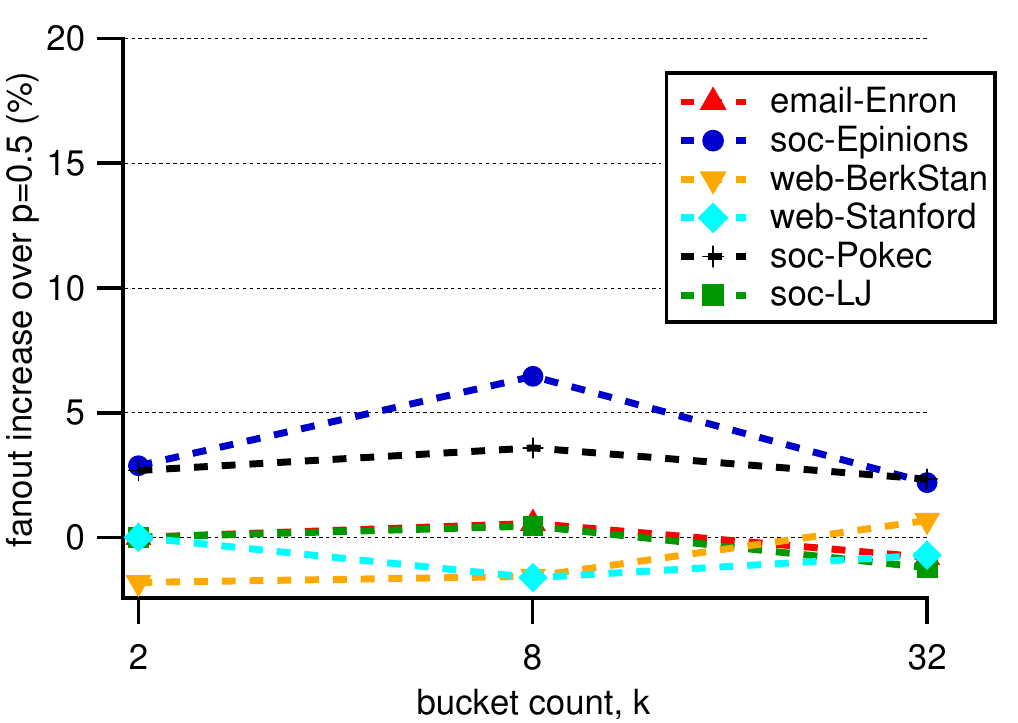}
            \caption{The clique-net optimization, $p=0.0$}
        \end{subfigure}
    \end{minipage}
    \caption{The comparison of different objective functions of \texttt{SHP-2} for $k \in \{2, 8, 32\}$:
        (a)~$0.5$-fanout vs $1$-fanout (direct) optimization, (b)~$0.5$-fanout vs $0$-fanout (clique-net) optimization.}
    \label{fig:fanout_prob}
\end{figure}

The right-hand plot of Figure~\ref{fig:fanout_prob} compares $\pfanout$ with the 
clique-net model defined in Section~\ref{sect:algo}, where 
the clique-net objective is optimized using \texttt{SHP-2}. 
The comparison to $p = 0.5$ reveals that clique-net optimization is often worse, but typically similar, depending on the graph and the number of buckets.  
In practice, we suggest optimizing for both $\pfanout$ and clique-net, as which surrogate objective performs better for fanout minimization depends on the graph.

\section{Discussion}

Storage sharding for production systems has many additional practical challenges~\cite{SH16}. Two requirements that arise from these challenges are (i)~\textit{incremental updates} of an existing partition and (ii)~\textit{balance across multiple dimensions}.

\begin{itemize}[(i)]
\item Incremental updates can be needed to avoid overhead from substantially changing a partition.  Our algorithm simply adapts to incremental updates by initializing with a previous partition and running a local search.  If a limited search moves too many data vertices, we can modify the move gain calculation to punish movement from the existing partition or artificially lower the movement probabilities returned via master in superstep four.

\item Basic $k$-way hypergraph partitioning balances the number of data vertices per bucket. Trivially, we can consider weighted data vertices, but additionally, a data vertex might have multiple associated dimensions (e.g., CPU cost, memory, disk resources etc.) that each require balance.  In practice, we have found that requiring strict balance on many dimensions substantially harms solution quality. Instead, we favor a simple heuristic that produces $c \cdot k$ buckets for some $c > 1$ that have loose balance requirements on all but one dimension, and merges them into $k$ buckets to satisfy load balance across all dimensions.
\end{itemize}

We stress that the storage sharding problem might have additional requirements that are not captured by our model. For example,
one could introduce some replication by allowing data records to be present on multiple servers, as it is done in~\cite{CJZM10,KQDK14}.
However, that would bring in new operational complications (e.g., synchronization between the servers when a data record is modified),
which are not always possible in an existing framework. Another potential extension of our model is 
to consider a better optimization goal that more accurately captures the relation between query latency and distribution
of data records among servers. We recently observed that the query fanout is not the only objective affecting its latency; 
the size of a request to a server also plays a role. For example, a query with $\fanout=2$ that needs $100$ data records
can be answered faster if the two servers contain an even number of records, $50$ and $50$, in comparison with a 
distribution of $99$ and $1$.
We leave a deeper investigation of this and other potential extensions of the model as an interesting future direction.

\section{Conclusion and Future Work}
In this paper, we presented Social Hash Partitioner, \texttt{SHP}, a distributed hypergraph partitioner that can optimize $\pfanout$, as well as the clique-net objective among others, through local search, and scales to far larger hypergraphs than existing hypergraph partitioning packages.  Because the careful design of \texttt{SHP} limits space, computational, and communication complexity, the applicability of our implementation to a hypergraph is only constrained by the number of Giraph machines available. We regularly apply \texttt{SHP} to hypergraphs that contain billions of vertices and hundreds of millions to even billions of hyperedges.  These hypergraphs correspond to bipartite graphs with billions of vertices and trillions of edges.

Despite the improved scalability and simplicity of the algorithm, our experiments demonstrate that \texttt{SHP} achieves comparable solutions to both single-machine and distributed hypergraph partitioners.  We note these results and conclusions might be different for other input hypergraphs (e.g. matrices from scientific computing, planar networks or meshes, etc.), and in cases where the hypergraph is small enough and solution quality is essential, running all available tools is recommended. While \texttt{SHP} occasionally produced the best solution in our experiments, other packages, especially \texttt{Zoltan} and \texttt{Mondriaan}, often claimed the lowest fanout.

Although our motivation for designing \texttt{SHP} is practical, it would be interesting to study fanout minimization from a theoretical point of view.  For example, the classical
balanced partitioning problem on unipartite graphs admits an $\Oh(\sqrt{\log k \log n})$-approximation for $\eps=2$~\cite{KNS09}
but it is unclear if a similar bound holds for our problem.  Alternatively, it would be interesting to know whether there is an optimal algorithm for some classes of hypergraphs, or an algorithm that provably finds a correct solution for certain random hypergraphs (e.g., generated with a planted partition model). Finally, it would be interesting to understand when and how minimizing $\pfanout$ speeds up algorithm convergence and improves solution quality over direct fanout minimization.

\section{Additional Authors}
Yaroslav Akhremtsev (Karlsruhe Institute of Technology) and  Alessandro Presta (Google) -- work done while at Facebook.

\section{Acknowledgments}
We thank Herald Kllapi for contributing to an earlier graph partitioner that evolved into \texttt{SHP}. We also thank Michael Stumm for providing comments on this work.

\newpage
\bibliographystyle{abbrv}
\bibliography{refs}

\end{document}